\documentclass[11pt,letterpaper]{article}
\pdfoutput=1
\usepackage[final=true,colorlinks=true,citecolor=blue]{hyperref}

\usepackage[utf8]{inputenc}
\usepackage{algorithm}
\usepackage{algorithmicx}
\usepackage[noend]{algpseudocode}
\usepackage[noadjust]{cite}
\usepackage[OT4]{fontenc}
\usepackage{amsthm}
\usepackage{amssymb}
\usepackage{amsmath}
\usepackage{amsfonts}
\usepackage{todonotes}
\usepackage{rotating}
\usepackage{xspace}
\usepackage[shortlabels]{enumitem}
\usepackage{verbatim}
\usepackage{mathrsfs}
\usepackage{framed}
\usepackage{thm-restate}
\usepackage[capitalise]{cleveref}
\usepackage{colortbl}
\usepackage{array,multirow}

\def\dd{\mathinner{.\,.}}
\newcommand{\cO}{\mathcal{O}}
\newcommand{\Oh}{\cO}
\newcommand{\cOtilde}{\tilde{\cO}}
\newcommand{\Ohtilde}{\cOtilde}
\newcommand{\eps}{\epsilon}
\newcommand{\stringeps}{\varepsilon}
\newcommand{\X}{\widetilde{X}}

\newcommand{\T}{\widetilde{T}}
\renewcommand{\L}{\mathcal{L}}
\newcommand{\dH}{\delta_{\operatorname{H}}}

\newcommand{\Mis}{\operatorname{Mis}}
\newcommand{\Occ}{\operatorname{Occ}_k^{\operatorname{H}}}
\newcommand{\EDSMmis}{\textsc{EDSM with $k$ Mismatches}\xspace}
\newcommand{\EDSMerr}{\textsc{EDSM with $k$ Errors}\xspace}

\newcommand{\APEmis}{\textsc{APE with $k$ Mismatches}\xspace}
\DeclareMathOperator{\poly}{poly}
\DeclareMathOperator{\polylog}{polylog}
\newcommand{\Ext}[1][k]{\operatorname{Ext}_{#1}^{\operatorname{H}}}
\newcommand{\eq}[1]{\begin{align*} #1 \end{align*}}
\newcommand{\set}[1]{\left\lbrace #1 \right\rbrace}
\newcommand{\QP}{\bar{P}}
\newcommand{\QT}{\bar{T}}

\usepackage[left=1in,right=1in,top=1in,bottom=1in]{geometry}
\usepackage{authblk}

\newtheorem{theorem}{Theorem}[section]

\newtheorem{lemma}{Lemma}


\makeatletter
\newcommand\thankssymb[1]{\textsuperscript{\@fnsymbol{#1}}}
\makeatother

\title{Faster ED-String Matching with $k$ Mismatches}
\date{}

\author[1]{Paweł Gawrychowski\thanks{Partially supported by the Polish National
Science Centre grant number 2023/51/B/ST6/01505.}}
\author[1]{Adam Górkiewicz\thankssymb{1}}
\author[1]{Pola Marciniak\thankssymb{1}}
\author[2]{Solon P. Pissis}
\author[1]{Karol Pokorski}

\affil[1]{Institute of Computer Science, University of Wrocław, Poland}
\affil[2]{CWI, Amsterdam, The Netherlands}

\newcommand{\genproblem}[3]{
\begin{framed}
  \noindent
  \textbf{Problem:} \textsc{#1}

  \noindent
  \textbf{Input:} #2

  \noindent
  \textbf{Output:} #3
\end{framed}
}

\begin{document}
\maketitle

\thispagestyle{empty}

\begin{abstract}
We revisit the complexity of approximate pattern matching in an elastic-degenerate string. Such a string is a sequence
of $n$ finite sets of strings of total length $N$, and compactly describes a collection of strings obtained by first choosing exactly
one string in every set, and then concatenating them together. This is motivated by the need of storing a collection of highly similar
DNA sequences.

The basic algorithmic question on elastic-degenerate strings is pattern matching: given such an elastic-degenerate string and a standard pattern of length
$m$, check if the pattern occurs in one of the strings in the described collection. Bernardini et al.~[SICOMP 2022] showed
how to leverage fast matrix multiplication to obtain an $\Ohtilde(nm^{\omega-1})+\Oh(N)$-time complexity for this problem, where $w$ is the matrix multiplication exponent.
However, from the point of view of possible applications, it is more desirable to work with approximate pattern matching,
where we seek approximate occurrences of the pattern. This generalisation has been considered in a few papers already,
but the best result so far for occurrences with $k$ mismatches, where $k$ is a constant, is the $\Ohtilde(nm^{2}+N)$-time algorithm of Pissis et al.~[CPM 2025]. This brings the question whether increasing the dependency on $m$ from
$m^{\omega-1}$ to quadratic is necessary when moving from $k=0$ to larger (but still constant) $k$.

We design an $\Ohtilde(nm^{1.5}+N)$-time algorithm for pattern matching with $k$ mismatches in an elastic-degenerate
string, for any constant $k$. To obtain this time bound, we leverage the structural characterization of occurrences with
$k$ mismatches of Charalampopoulos et al.~[FOCS 2020] together with fast Fourier transform. We need to work with
multiple patterns at the same time, instead of a single pattern, which requires refining the original characterization. This
might be of independent interest.
\end{abstract}

\clearpage
\setcounter{page}{1}

\section{Introduction}

An \emph{elastic-degenerate} string (ED-string, in short) $\T$ is a sequence $\T=\T[0] \T[1] \dd \T[n-1]$ of $n$ finite sets, where $\T[i]$ is a subset of $\Sigma^*$
and $\Sigma$ is an ordered finite alphabet. 
The \emph{length} of $\T$ is defined as the length $n=|\T|$ of the associated sequence.
The \emph{size} of $\T$ is defined as $N=N_{\stringeps}+\sum^{n}_{i=1}\sum_{S\in \T[i]} |S|$, where $N_{\stringeps}$ 
is the total number of empty strings in $T$. 
The \emph{cardinality} of $\T$ is defined as $c=\sum_{i=1}^n |\T[i]|$.
Intuitively, every ED-string represents a collection of strings, each of generally different length.
We formalize this intuition as follows. The \emph{language} $\L(\T)$ generated by $\T$ is defined 
as $\L(\T)=\{S_1 S_2 \dd S_n\,:\,S_i \in \T[i]\text{ for all }i\in[0, n-1]\}$.

The main motivation behind introducing ED-strings~\cite{DBLP:journals/iandc/IliopoulosKP21}
was to encode a collection of highly similar DNA sequences in a compact form.
Consider a multiple sequence alignment (MSA) of such a collection (see \cref{fig:EDS}).
Maximal conserved substrings (conserved columns of the MSA) form singleton sets of the ED-string
and the non-conserved ones form sets that list the corresponding variants.
Moreover, the language of the ED-string consists of (at least) the underlying sequences of the MSA.
Under the assumption that these underlying sequences are highly similar, the size of
the ED-string is substantially smaller than the total size of the collection.
ED-strings have been used in several applications: indexing a pangenome~\cite{BINF}, 
on-line string matching in a pangenome~\cite{DBLP:journals/bioinformatics/CislakGH18}, 
and pairwise comparison of pangenomes~\cite{FBINF}.

\begin{figure}[htpb]
    \centering
    \includegraphics[width=10cm]{./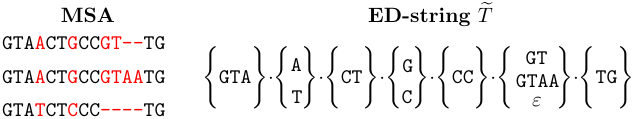}
    \caption{MSA of three strings and the corresponding (non-unique) ED-string $\T$.}
    \label{fig:EDS}
\end{figure}

ED-strings are also interesting as a simplified model
for string matching on node-labeled graphs~\cite{DBLP:conf/wabi/Ascone0CEGGP24}.
The ED-string $\T$ can be viewed as a graph of $n$ layers~\cite{DBLP:conf/wabi/MakinenCENT20}, 
where the $c$ nodes are strings from $\Sigma^*$, such that from layer $i-1$ to layer $i$ all possible edges are present 
and the nodes in layer $i$ are adjacent only to the nodes in layers $i-1$ and $i+1$.
As a simplified model, ED-strings offer important advantages, such as \emph{on-line} (left to right) 
string matching algorithms whose running times have a \emph{linear dependency on} 
$N$~\cite{DBLP:conf/cpm/GrossiILPPRRVV17,DBLP:conf/cpm/AoyamaNIIBT18,DBLP:journals/siamcomp/BernardiniGPPR22}.
This linear dependency on $N$ (without any multiplicative polylogarithmic factors) is highly desirable 
in applications because, nowadays, it is typical to encode a very large number of genomes 
(e.g., millions of SARS-CoV-2 genomes\footnote{\url{https://gisaid.org}}) in a single representation resulting in huge $N$ values.

In this work, we focus on the string matching (pattern matching) task.
In the \emph{elastic-degenerate string matching} (EDSM) problem, we are given a string $P$ of length $m$ (known as the \emph{pattern})
and an ED-string $\T$ (known as the \emph{text}), and we are asked to find the occurrences of $P$ in $\L(\T)$.
Grossi et al.~showed that EDSM can be solved in $\Oh(nm^2+N)$ time using combinatorial pattern matching tools~\cite{DBLP:conf/cpm/GrossiILPPRRVV17}. 
Aoyama et al.~improved this to $\Ohtilde(nm^{1.5})+\Oh(N)$ time by employing fast Fourier transform~\cite{DBLP:conf/cpm/AoyamaNIIBT18}. 
Finally, Bernardini et al.~improved it to $\Ohtilde(nm^{w-1})+\Oh(N)$ time~\cite{DBLP:journals/siamcomp/BernardiniGPPR22} 
by employing fast matrix multiplication, where $w<2.37134$~\cite{DBLP:conf/soda/AlmanDWXXZ25} is the matrix multiplication exponent. 
The authors of~\cite{DBLP:conf/soda/AlmanDWXXZ25} also showed a conditional lower bound for combinatorial algorithms\footnote{The 
term``combinatorial'' is not well-defined; such lower bounds often indicate that other techniques, including fast matrix multiplication, 
may be employed to obtain improved bounds for a specific problem.} for EDSM
stating that EDSM cannot be solved in $\Oh(nm^{1.5-\eps}+N)$ time, for any constant $\eps>0$.

In the approximate counterpart of EDSM, we are also given an integer $k>0$, and we are asked to find $k$-approximate occurrences of $P$ in $\L(\T)$; namely, the occurrences that are in Hamming or edit distance at most $k$ from $P$.
For Hamming distance, we call the problem \EDSMmis (see \cref{fig:EDSM}); and for edit distance, \EDSMerr.
The approximate EDSM problem was introduced by Bernardini et al.~who showed a simple $\Oh(k m c + kN)$-time algorithm 
for \EDSMmis and an $\Oh(k^2 m c + kN)$-time algorithm for \EDSMerr using combinatorial pattern
matching tools~\cite{DBLP:journals/tcs/BernardiniPPR20}. Pissis et al.~then improved the dependency on $k$ in both these results, 
obtaining an $\Ohtilde(k^{2/3}mc+\sqrt{k}N)$-time algorithm for \EDSMmis and an $\Ohtilde(kmc+kN)$-time 
algorithm for \EDSMerr~\cite{PRZ25}.

\begin{figure}[htpb]
    \centering
    \includegraphics[width=13cm]{./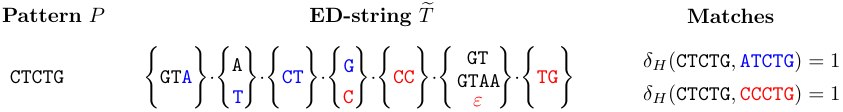}
    \caption{An example of \EDSMmis for $k=1$.}
    \label{fig:EDSM}
\end{figure}

Unfortunately, the cardinality $c$ of $\T$ in the above complexities is bounded only by $N$, so even for $k=1$, 
the existing algorithms run in $\Omega(mN)$ time in the worst case.
In response, Bernardini et al.~\cite{bernardini2024elastic} showed many algorithms for approximate EDSM for $k=1$ 
working in $\Ohtilde(nm^2+N)$ time or in $\Oh(nm^3+N)$ time for both the Hamming and the edit distance metrics. 
Pissis et al.~then improved these algorithms for $k=1$ (for both metrics) to $\Oh(nm^2+N)$ time~\cite{PRZ25}. 
The authors of~\cite{PRZ25} also showed algorithms for $k>1$ (for both metrics) 
that run in $\Ohtilde(nm^2 + N)$ time, for any \emph{constant} $k>1$. 
\renewcommand{\arraystretch}{1}
\begin{table}[h]
    \centering
    \begin{tabular}{c|c|c}
       \textbf{Time complexity} & \textbf{Remarks} & \textbf{Reference} 			\\\hline
       $\Oh(cm + N)$ & $k=\Oh(1)$ &  \cite{DBLP:journals/tcs/BernardiniPPR20} 	\\\hline
       $\Oh(nm^3+N)$ & $k=1$ & \cite{bernardini2024elastic} 						\\\hline
       $\Oh(nm^2+N\log m)$ & $k=1$ & \cite{bernardini2024elastic} 					\\\hline
       $\Oh(nm^2+N)$ & $k=1$ & \cite{PRZ25} 								\\\hline
       $\Ohtilde(nm^{1.5})+\Oh(N)$ & $k=1$ & \textbf{\cref{thm:main}} 				\\\hline
       $\Ohtilde(nm^2+N)$ & $k=\Oh(1)$ & \cite{PRZ25} 								\\\hline
       $\Ohtilde(nm^{1.5}+N)$ & $k=\Oh(1)$ & \textbf{\cref{thm:main2}} 				\\\hline
    \end{tabular}
    \caption{The upper-bound landscape of \EDSMmis for any constant $k\ge 1$.}
    \label{tab:1EDSM}
\end{table}

In this work, we consider the \EDSMmis problem with constant $k\ge 1$, and observe that all the existing algorithms
have at best a \emph{quadratic dependency on} $m$, the length of the pattern, for this problem (see \cref{tab:1EDSM}).
This is in stark contrast to the case of $k=0$, and brings the question of whether
non-combinatorial methods could be employed to solve \EDSMmis for any constant $k\ge 1$, 
similar to EDSM ($k=0$)~\cite{DBLP:conf/cpm/AoyamaNIIBT18,DBLP:journals/siamcomp/BernardiniGPPR22}. 

\begin{center}
	\emph{Can we break through the $m^2$ barrier for \EDSMmis for $k=\Oh(1)$?}
\end{center}


\begin{restatable}{theorem}{mainone}\label{thm:main}
Given a pattern $P$ of length $m$ and an ED-string $\T$ of length $n$ and size
$N$, \EDSMmis, for $k = 1$, can be solved in
$\Oh(nm^{1.5}\polylog m + N)$ time.
\end{restatable}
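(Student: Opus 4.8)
The plan is to reduce \EDSMmis for $k=1$ to a collection of \APEmis (Active Prefixes Extension) subproblems, one per position $i$ of the ED-string, and to solve each such subproblem in $\Ohtilde(m^{1.5})$ amortized time so that the total becomes $\Ohtilde(nm^{1.5}+N)$. Recall the standard on-line framework for EDSM: processing $\T$ left to right, we maintain for each position $i$ a set of \emph{active prefixes} of $P$, namely the prefix lengths $\ell$ such that some string in $\L(\T[0]\dd\T[i-1])$ ends with $P[0\dd \ell-1]$; a full occurrence is reported whenever $P$ itself is matched inside some $\T[i]$ or spans several sets. The approximate version must additionally track, for each active prefix, how many mismatches have already been consumed (an integer in $\{0,1\}$ when $k=1$). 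The crux is that, at position $i$, for every string $S\in\T[i]$ we must extend every active prefix $P[0\dd \ell-1]$ (with its mismatch budget) by aligning $S$ against $P[\ell\dd\ell+|S|-1]$, counting the additional mismatches, and producing the new active prefixes $\ell+|S|$ together with their updated budgets; we must also detect short strings $S$ that are fully contained in the interior of $P$, which requires matching $S$ as a substring of $P$ with at most $1$ mismatch at every alignment offset. The $\Oh(N)$ additive term comes from the standard trick of building a generalized suffix tree / suffix automaton of $P$ together with the concatenation of all $S\in\T[i]$ so that the per-character cost is charged once to $N$.

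The first real step is to formalize the single-position task as \APEmis: given the vector of active prefixes (with their mismatch counts) arriving at position $i$, and given the multiset $\T[i]$ of strings, output the vector of active prefixes leaving position $i$. The key algorithmic idea, following the $k=0$ treatment of Aoyama et al.~\cite{DBLP:conf/cpm/AoyamaNIIBT18}, is to split the work according to the lengths of the strings $S\in\T[i]$: \emph{long} strings (say $|S|\ge\sqrt{m}$) are few in number relative to their total length, so each can be handled individually in $\Ohtilde(m)$ time via a single-string approximate-matching routine (e.g.\ Landau--Vishkin or kangaroo jumps over $P$ using an LCP/LCE data structure on $P$), costing $\Ohtilde(m)$ per long string and hence $\Ohtilde(m\cdot(|S|/\sqrt m))=\Ohtilde(\sqrt m\,|S|)$ amortized against $N$; \emph{short} strings ($|S|<\sqrt m$) are numerous, and here we want to batch them so that the amortized per-string cost is $\Ohtilde(\sqrt m)$. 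For the short strings the approach is to use fast Fourier transform as a Boolean/counting convolution to simultaneously compute, for all alignment offsets $j\in[0,m)$, the number of matches between each short $S$ and $P$ starting at $j$: by the structural characterization of \cite{charalampopoulos2020faster} an alignment has at most $1$ mismatch iff it is a run of matches interrupted in at most one place, and this is exactly what a single convolution (or $\Oh(1)$ convolutions over split alphabets / kangaroo-refined candidates) can certify. Summing the short strings' lengths, the FFTs over a domain of size $\Oh(m)$ cost $\Ohtilde(m)$ per short string but, crucially, can be \emph{aggregated} across all short strings in $\T[i]$ so that the whole position is handled in $\Ohtilde(m^{1.5})$.

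Concretely, the central lemma will be a batched convolution that, given all active prefixes and all short strings at position $i$, computes in $\Ohtilde(m^{1.5})$ time the full transition: encode the active-prefix vector as a $0/1$ (or budget-tagged) indicator array indexed by prefix length, encode each short $S$ by its match/mismatch profile against every window of $P$, and convolve to obtain, for each candidate output length, whether it is reachable with at most one mismatch. The $k=1$ case is what makes a single extra convolution suffice: we run one convolution that detects \emph{exact} extension and a constant number of additional convolutions (one per symbol, or via the $\Oh(1)$-mismatch kangaroo verification of the $\Oh(\sqrt m)$ candidate offsets per string) that account for the single permitted mismatch, and then combine the two budget layers ($0$ mismatches so far vs.\ $1$ already spent) by taking the appropriate OR. The characterization of \cite{charalampopoulos2020faster} guarantees that the set of starting positions where $P$ occurs in a text with $\le 1$ mismatch is described by $\Oh(1)$ arithmetic progressions (or, more usefully here, that verifying $\le 1$ mismatch at a given offset costs $\Oh(1)$ after $\Ohtilde(m)$ preprocessing of $P$), which is exactly what lets us postprocess the FFT output without blowing up the time.

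The main obstacle I expect is the short-string batching: a naive per-short-string FFT costs $\Ohtilde(m)$ each, and there can be $\Theta(\sqrt m)$ short strings sharing no structure, which would give $\Ohtilde(m^{1.5})$ \emph{per position} only if the number of short strings is $\Oh(\sqrt m)$ — but in general $\T[i]$ may contain many more short strings whose total length is still only charged to $N$. The resolution must exploit that the total length of all short strings at position $i$ is at most $N_i$ and that the convolutions can be packed: rather than one length-$m$ FFT per string, we concatenate the short strings (with separators) and perform $\Oh(\sqrt m)$ FFTs of total size $\Oh(m + N_i)$, so the FFT cost becomes $\Ohtilde(\sqrt m\,(m+N_i))=\Ohtilde(m^{1.5}+\sqrt m\,N_i)$, and the $\sqrt m\,N_i$ term sums to $\Ohtilde(\sqrt m\,N)$, which is dominated by $\Oh(N)$ only after a more careful length-bucketing that keeps each FFT domain proportional to the length class being processed. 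Getting this accounting exactly right — partitioning short strings into $\Oh(\log m)$ length buckets, choosing the FFT length per bucket, and charging the residual linear-in-$N$ work to the additive $\Oh(N)$ term rather than the multiplicative $\Ohtilde(nm^{1.5})$ term — is the delicate part, and it is precisely here that the single-mismatch structure (and, for \cref{thm:main2}, its constant-$k$ refinement) is used to keep the number of required convolutions constant per bucket.
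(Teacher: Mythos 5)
Your overall skeleton---reduce \EDSMmis to one instance of \APEmis per symbol, split the strings of $\T[i]$ by length around $\sqrt{m}$, handle long strings individually and batch the short ones---matches the paper's decomposition at the coarsest level (cf.\ \cref{lem:reduction}), but both halves of the split have genuine gaps. For the long strings, you charge $\Ohtilde(m)$ to each string of length at least $\sqrt{m}$; a symbol of size $N_i$ can contain $N_i/\sqrt{m}$ such strings, so this costs $\Ohtilde(\sqrt{m}\,N_i)$ per symbol and $\Ohtilde(\sqrt{m}\,N)$ overall, which exceeds the additive $\Oh(N)$ budget by a $\sqrt{m}$ factor (the claimed bound must hold even when $N \gg nm^{1.5}$). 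The paper's long case (\cref{thm:long}) spends only $\poly(k)$ time per long pattern: it first normalizes to restricted instances with $|P_i|\in[\ell,1.1\ell)$ and $|T|\le 1.5\ell$ so that the dichotomy of \cref{two cases pillar theorem} applies, whence each pattern either has $\Oh(k)$ occurrences (processed naively in $\Oh(k)$ time) or is approximately periodic with a period $Q$ that is \emph{common to all surviving patterns} (\cref{we have common period theorem}); the new structural ingredient, \cref{key theorem}, then shows that the occurrence progressions of \emph{all} patterns can be organized into $\Oh(k^2)$ region pairs determined by the text alone, after which one sumset FFT per residue class modulo $|Q|$ handles every pattern in a class simultaneously, for a total of $\Oh((d+\ell\sqrt{d\log\ell})\poly(k))$ per restricted instance. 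You cite the arithmetic-progression characterization but never exploit it across patterns, and without that step the long case does not fit in the budget.

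For the short strings, your batched convolution is not yet an algorithm: computing the mismatch count of every short string against every window of $P$ has output size $\Omega(m\cdot|\T[i]|)$ in the worst case, convolving a concatenation of the short strings against $P$ produces cross-terms mixing different strings, and you yourself concede that your accounting leaves a residual $\sqrt{m}\,N_i$ term that you do not know how to absorb. The paper avoids FFT in this regime entirely and instead inverts the quantification: it deduplicates the patterns (so $\log d=\Oh(\log m)$), builds a $k$-errata tree over the dictionary (\cref{lem:errata}), and issues one query per (position, length) pair of the text---$\Oh(m\sqrt{m})$ queries in total, each answering ``is \emph{some} pattern within distance $d'$ of this fragment''---plus a separate brute-force regime (\cref{thm:veryshort}) for strings of length $\Oh(\log^2 m)$ to keep the per-pattern preprocessing of the errata tree affordable. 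The reduction to \APEmis and the length thresholds in your write-up are fine, but both the long-string and short-string subroutines need to be replaced along these lines for the stated bound to hold.
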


\begin{restatable}{theorem}{maintwo}\label{thm:main2}
Given a pattern $P$ of length $m$ and an ED-string $\T$ of length $n$ 
and size $N$, \EDSMmis, for any constant $k\ge 1$, can be solved in $\Oh((nm^{1.5}+N)\polylog m)$ time.
\end{restatable}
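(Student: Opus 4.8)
The plan is to reuse the left-to-right processing framework behind \cref{thm:main}: we scan $\T$ layer by layer, maintaining for every prefix length $j\in[0,m]$ and every budget $\ell\in[0,k]$ a bit recording whether $P[0\dd j-1]$ can be matched with at most $\ell$ mismatches by a string of $\L(\T[0\dd i-1])$ ending exactly at the current layer boundary. Advancing from layer $i-1$ to layer $i$ then amounts to one call of \APEmis on the current state and the set $\T[i]$ (of total length $N_i$), together with reporting the $k$-mismatch occurrences of $P$ that end inside some $S\in\T[i]$ or that lie entirely within such an $S$. These last two reporting tasks reduce to ordinary $k$-mismatch pattern matching of $P$ in texts of total length $O(N)$, which is near-linear and accounts for the additive $\Oh(N\polylog m)$ term; so the whole problem reduces to solving \APEmis in $\Ohtilde(m^{1.5})$ amortised time per layer, plus a term linear in $N_i$.

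For \APEmis I only ever need, for each $S\in\T[i]$ with $|S|\le m$, the set of alignments $j$ at which $\dH(S,P[j\dd j+|S|-1])\le k$ (capped values suffice). I would compute these through the pigeonhole reduction underlying the characterisation of Charalampopoulos et al.: cut $S$ into $k+1$ blocks, so that any $k$-mismatch alignment forces at least one block to match $P$ exactly; locate all exact block occurrences in $P$ with a single Aho--Corasick construction and scan over $\T[i]$ (in time $\Ohtilde(m+N_i)$ plus the number of reported occurrences); and verify each induced candidate alignment by $k+1$ longest-common-extension queries in $\Oh(k)$ time after an $\Oh(m)$-time suffix-tree preprocessing of $P$. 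Splitting $\T[i]$ at the length threshold $\sqrt m$, a long string ($|S|>\sqrt m$) has blocks of length $>\sqrt m/(k+1)$, so each such block is either highly periodic or has only $\Oh(m\cdot k/\sqrt m)=\Oh(k\sqrt m)$ occurrences in $P$; since a layer holds at most $N_i/\sqrt m$ long strings, the non-periodic sub-case contributes only $\Oh(\poly(k)\,N_i)$ candidate alignments in total.

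The two remaining difficulties are short strings, whose short blocks may each occur $\Theta(m)$ times, and the periodic sub-case, where a single string may have $\Theta(m)$ genuine $k$-mismatch occurrences. For short strings ($|S|\le\sqrt m$) I would avoid enumerating incidences and instead evaluate the budgeted update by fast Fourier transform: grouping the short strings by their at most $\sqrt m$ distinct lengths and, for each length $s$ and each error count $d\le k$, forming the Boolean array that marks the positions $j$ at which some length-$s$ dictionary string lies at Hamming distance exactly $d$ from $P[j\dd j+s-1]$, after which the transition $B[j+s][\ell+d]\leftarrow A[j][\ell]$ is an OR-convolution computable in $\Ohtilde(m)$ per length, i.e.\ $\Ohtilde(m^{1.5})$ per layer. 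For the periodic sub-case I would invoke the refinement of the structural characterisation to \emph{several patterns at once}: if many strings of $\T[i]$ together with the relevant stretch of $P$ share one common approximate period $Q$ with $|Q|=\Oh(\sqrt m/k)$, then their $k$-mismatch alignments form $\Oh(k)$ arithmetic progressions described by $Q$ and by the $\Oh(k)$ mismatch positions of each string against $Q^{\infty}$, and the whole family of budgeted transitions can be applied in bulk rather than one occurrence at a time.

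The main obstacle is precisely this last step. The characterisation of Charalampopoulos et al.\ is stated for a single pattern, so here I must argue that the long strings of a layer which fall into the periodic sub-case all agree, up to $\Oh(k)$ mismatches, with powers of one common primitive $Q$, and that the resulting occurrences and their mismatch counts can be read off and merged with the active-prefix budget dimension in time proportional to the number of progressions rather than the number of occurrences. Getting the bookkeeping of the two interacting budgets right --- the $\Oh(k)$ mismatches spent inside the periodic approximation and the $\ell$ mismatches already carried by the active prefix --- while keeping every per-layer cost at $\Ohtilde(m^{1.5})$ and every global cost at $\Ohtilde(N)$, is where the real work lies; the case $k=1$ of \cref{thm:main} is the degenerate instance in which each string contributes at most one mismatch and the progressions collapse, which is why it admits a more direct treatment.
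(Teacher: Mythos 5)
Your reduction to \APEmis and your case split at length $\sqrt m$ follow the paper, and for long patterns you correctly identify the crux---extending the periodic structural characterization from one pattern to a whole family sharing a common primitive period, and merging the periodic mismatch count with the carried prefix budget. But you leave exactly that step open (``is where the real work lies''), and it is the paper's main technical content: after trimming $T$ so that it is covered by approximate occurrences, all the approximate periods $Q_i$ are shown to be cyclically equivalent and a common $Q$ is a $6k$-period of $T$ (\cref{we have common period theorem}); $T$ is then cut into $\Oh(k)$ regions at the positions of $\Mis(T,\QT)$, and for each region pair $(R_s,R_t)$ and each pattern either the \emph{entire} progression $(C_r\cap A\cap R_s)+|P|$ lands in the output (when $\dH(P,\QP)+t-s\le k$), which lets all patterns with the same remainder $r$ modulo $|Q|$ be processed by a single sumset of a common position set with the set of pattern lengths, or else the contribution is confined to an exception set $E$ of size $\Oh(k^2)$ handled naively (\cref{key theorem}); finally a threshold $z$ on $|Q|$ balances the FFT cost $\Oh(z\ell\log\ell)$ against naive enumeration in $\Oh(\ell/z)$ per pattern. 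Without this refinement (and the preliminary rescaling to instances with $|P_i|\in[\ell,1.1\ell)$ and $|T|\le 1.5\ell$ that makes \cref{two cases pillar theorem} applicable) your periodic sub-case does not close.

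The short-pattern half has a second, independent gap. You propose to form, for each length $s\le\sqrt m$ and each $d\le k$, the Boolean array marking the positions $j$ where some length-$s$ dictionary string is at Hamming distance exactly $d$ from $P[j\dd j+s)$, and then to apply an OR-convolution. Forming that array \emph{is} dictionary matching with $k$ mismatches---it is the whole difficulty, and you give no method for it; moreover, once the array is known, the transition $B[j+s][\ell+d]\leftarrow A[j][\ell]$ is a shift by the fixed $s$, so the FFT buys nothing here. (A per-string convolution-based Hamming-distance computation costs $\Ohtilde(m)$ per string and is far too slow when a layer contains many short strings, and your pigeonhole blocks of length $s/(k+1)$ can occur $\Theta(m)$ times each, as you note.) The paper instead answers each of the $\Oh(m\sqrt m)$ (position, length) membership queries with a $k$-errata trie in $\Oh(\log^{k}m\log\log m)$ time (\cref{lem:errata}, \cref{thm:short}), after pruning the dictionary to $m^{\Oh(k)}$ relevant patterns via \cref{lem:partition} so that $\log d=\Oh(k\log m)$. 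Your accounting for non-periodic long blocks is fine, but as written the proposal establishes neither of the two $m^{1.5}\polylog m$ bounds it needs.
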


\paragraph{Previous approaches.} The key ingredient of~\cite{bernardini2024elastic} and~\cite{PRZ25} to achieve the $m^2$ factor
for $k=1$ is a new counterpart of the $k$-errata tree~\cite{coleetal}, where the copied nodes of the input trie are explicitly inserted into the tree.
This counterpart is an actual trie, and hence it allows to apply standard tree traversal algorithms. 
Since for $k=1$, the constructed trie for $\tilde{T}[i]$ and the suffixes of $P$, has $\cO(m\log N)$ nodes originating from $P$, 
bitwise $\cO(m/\log N)$-time operations per such node result in the desired complexity. 
The main tool in~\cite{PRZ25} for extending to a constant $k>1$ is also $k$-errata trees; however, the authors of~\cite{PRZ25} manage to apply $k$-errata trees as a black-box. We stress that those algorithms are combinatorial, that is, they do not use fast Fourier transform or fast matrix multiplication.

\paragraph{Our approach.}
As in the previous works on elastic-degenerate string matching, we work with the so-called active prefixes extension problem.
In this problem, we are given a text of length $m$, an input bitvector of length $m$, and a collection of patterns of total length $N$.
The goal is to produce an output bitvector, also of length $m$. Informally, whenever there is an occurrence of some pattern starting
at position $i$ and ending at position $j$, we can propagate a one from the input bitvector at position $i-1$ to a one in the output bitvector
at position $j$. For approximate pattern matching, we have $k+1$ input and output bitvectors, corresponding to matching prefixes
with different (i.e., the corresponding) number of mismatches.

The previous solutions can be seen as propagating the information from every $i-1$ to every $j$ explicitly. This, of course, cannot
achieve time complexity better than quadratic in $m$. Instead, we leverage the following high-level idea: if a given pattern occurs
very few times in the text, then we can afford to iterate through all of its occurrences and propagate the corresponding information.
Otherwise, its occurrences are probably somewhat structured. More concretely, exact occurrences of a pattern of length $\ell$
in a text of length $1.5\ell$ form a single arithmetical progression. This has been extended by Bringmann et al.~\cite{BringmannWK19}
to occurrences with $k$ mismatches, and further refined (and made effectively computable) by Charalampopoulos et al.~\cite{Charalampopoulos2020}:
either there are few occurrences, or they can be represented by a few arithmetical progressions (where few means polynomial in $k$).
Further, a representation of all the occurrences can be computed effectively.

To implement this high-level idea, we first apply some relatively standard preliminary steps that allow us to handle short patterns
effectively with $k$-errata trees. Further, we show how to reduce a given instance to multiple instances in which the pattern and the text
are roughly of the same length. Then, we handle patterns with only a few occurrences naively.
For the remaining patterns, we obtain a compact representation (as a few arithmetical progressions) of their occurrences.
We cannot afford to process each progression separately, but we observe that, because we have restricted the length of the text,
their differences are in fact equal to the same $q$ for every remaining pattern. Now, if $q$ is somewhat large
(with the exact threshold to be chosen at the very end to balance the complexities), we can afford to process every
occurrence of every pattern naively.
Otherwise, we would like to work with every remainder modulo $q$ separately, leveraging fast Fourier transform to
process all progressions starting at the positions with that remainder together.
As a very simplified example, if the progressions were the same for all the patterns, we only need to compute the sumset of the
set of starting positions with a one in the input bitvector (restricted to positions with a specific reminder modulo $q$)
with the set of lengths of the patterns. This can be indeed done with a single the fast Fourier transform.
However, the structural characterization of Charalampopoulos only says that, for every pattern, we have $\Oh(k^{2})$
arithmetical progressions with the same period $q$. However, the progressions are possibly quite different for
different patterns. Our new insight is that, in fact, we can group the progressions into only a few classes, more specifically
$\Oh(k^{2})$ irrespectively on the number of patterns, and then process
each class together. This requires looking more carefully at the structural characterization of Charalampopoulos et al.~\cite{Charalampopoulos2020},
and might be of independent interest.

\paragraph{Structure of the paper.}
In \cref{sec:prel}, we provide some preliminaries.
In \cref{sec:APE}, we discuss how \EDSMmis can be solved via the \APEmis problem, the auxiliary problem used also in previous solutions~\cite{DBLP:conf/cpm/AoyamaNIIBT18,DBLP:journals/siamcomp/BernardiniGPPR22,bernardini2024elastic,PRZ25}.
In \cref{sec:algo}, we present our algorithms for different cases: very short in \cref{sec:very short}; short in \cref{sec:short}; and
finally the most interesting case in \cref{sec:long}. We conclude with balancing the thresholds in \cref{sec:combine}.

\paragraph{Computational Model.} We assume the standard Word RAM model with words consisting of
$\Omega(\log n)$ bits, where $n$ is the size of the input. Basic operations on such words, such as indirect addressing and
arithmetical operations, are assumed to take constant time.

\section{Preliminaries}\label{sec:prel}

\paragraph{Strings.}
Let $\Sigma$ be a finite ordered alphabet of size $|\Sigma| = \sigma$.
We will usually assume that $\Sigma=\{1,2,\ldots,\poly(n)\}$, where $n$ is the size of the input, which is called the \emph{polynomial alphabet}.
Elements of $\Sigma$ are called \emph{characters}.
A sequence of characters from $\Sigma$, $X[0]X[1] \dd X[n-1]$, is called a (classic) \emph{string}
$X$.  We call $n$ the \emph{length} of $X$, and denote it by $|X|$.
The empty string is denoted by $\stringeps$.
By $X[i \dd j] = X[i]X[i+1] \dd X[j]$, we denote a \emph{fragment} of $X$
(starting at position $i$ and ending at position $j$), which equals $\stringeps$ when $i > j$.
Fragments of the form $X[\dd j] := X[0 \dd j]$ and $X[i \dd] := X[i \dd n-1]$
are called \emph{prefixes} and \emph{suffixes} of $X$, respectively.
A fragment of $X$ (or its prefix/suffix) is called \emph{proper}, if it is not equal to $X$.
Strings that are fragments of $X$ (for some $i$ and $j$) are called \emph{substrings} of $X$.
We also write $X[i \dd j)$ to denote $X[i \dd j-1]$.
By $XY$, we denote the \emph{concatenation} of strings $X$ and $Y$, i.e.
the string $X[0] \dd X[|X|-1]Y[0] \dd Y[|Y|-1]$. String $X$ is a \emph{cyclic shift} of $Y$ when $X=X_{1}X_{2}$ and $Y=X_{2}X_{1}$, and then we call $X$ and $Y$ \emph{cyclically equivalent}.
We say that $X$ is an \emph{occurrence} in $Y$ (at position $t$),
if $Y = AXB$ for some strings $A$ and $B$ such that $|A| = t - 1$.
Finally, $X^{r}$ is the \emph{reversal} of $X$, i.e. the string $X[n-1]X[n-2] \dd X[0]$.

\paragraph{Elastic-Degenerate Strings.}

We study the following extensions of classic strings.

A \emph{symbol} (over $\Sigma$) is an unordered subset of
(classic) strings from $\Sigma^*$, different from $\{\stringeps\}$ and $\emptyset$.
Note that symbols may contain $\stringeps$ but not as the only element.
The \emph{size} of a symbol is the total length of all strings in the symbol
(with the additional assumption that the empty string is counted as if it had
length 1).
The \emph{Cartesian concatenation} of two symbols $X$ and $Y$ is defined as
$X \otimes Y := \{ xy~|~x \in X, y \in Y \}$.

An \emph{elastic-degenerate string} (or \emph{ED-string})
$\X = \X[0]\X[1] \dd \X[n-1]$
(over alphabet $\Sigma$) is a sequence of symbols (over $\Sigma$).
We use $|\X|$ to denote the length of $\X$, i.e. the length of the associated sequence (the number of symbols).
The \emph{size} of $\X$ is the sum of the sizes of the symbols in $\X$.
As for classic strings, we denote by
$\X[i \dd j] = \X[i]\X[i+1] \dd \X[j]$
a \emph{fragment} of $\X$.
We similarly denote \emph{prefixes} and \emph{suffixes} of $\X$.
The \emph{language} of $\X$ is $\L(\X) = \X[0] \otimes \X[1] \otimes \dd \otimes \X[n-1]$.

Given a (classic) string $P$ and an ED-string $\T$, we say that
$P$ \emph{matches} the fragment $\T[i \dd j]$ (or that an \emph{occurrence}
of $P$ starts at position $i$ and ends at position $j$ of $\T$), if $i = j$
and $P$ is a fragment of at least one of the strings of $\T[i]$
(the whole pattern is fully contained in one of the symbols), or if $i < j$
and there is a sequence of strings $(P_i, P_{i+1}, \dd, P_j)$, such that
$P = P_i P_{i+1} \dd P_j$, $P_i$ is a suffix of one of the strings of
$\T[i]$, $P_k \in \T[k]$ for all $i < k < j$, and $P_j$ is a prefix of one
of the strings of $\T[j]$ (the pattern uses parts of at least two symbols).

\paragraph{Hamming distance.}
Given two (classic) strings $X, Y$ of the same length
over the alphabet $\Sigma$, their \emph{Hamming distance} $\dH(X, Y)$ is defined
as the number of \emph{mismatches} (positions $i$ such that $X[i] \neq Y[i]$).
We use $\Mis(X, Y)$ to denote the set of mismatches.

We say that a string $X$ is an \emph{approximate fragment}
(with at most $k$ mismatches) of $Y$ if there is a string $X'$ with
$\dH(X, X') \le k$, such that $X'$ is a substring of $Y$.
We similarly define \emph{approximate prefixes} and \emph{approximate suffixes}.
We write $\Occ(X, Y)$ to denote the set of all $k$-mismatch approximate occurrences of $X$ in $Y$,
i.e. all positions $i$, such that $\dH(X, Y[i \dd i + |X|)) \le k$.

Given a string $P$, an ED-string $\T$ and an integer $k \ge 1$, we say that
$P$ \emph{approximately matches} the fragment $\T[i \dd j]$
(with at most $k$ mismatches) of $\T$, or that an approximate occurrence of $P$
starts at position $i$ and ends at position $j$ of $\T$), if there is a string $P'$ such that
$\dH(P, P') \le k$ and $P'$ matches $\T[i \dd j]$.
We stress that, as in the case of exact occurrences, each approximate occurrence
of $P$ in $\T$ is of the following
forms: either $P$ has Hamming distance at most $k$ to a fragment
of a string in a symbol of $\T$; or
it uses parts of at least two symbols of $\T$. In the latter case, a prefix of $P$ is an approximate (possibly empty) suffix of a string in the first used symbol of $\T$,
a suffix of $P$ is an approximate (possibly empty) prefix of a string
in the last used symbol of $\T$,
and appropriate fragments of the pattern are approximate matches of a string in
all other used symbols of $\T$ (except the first and the last one).

\paragraph{Periodicity.}
For a string $X$, we write $X^{\infty}$ to denote the string $X$ concatenated infinitely many times with itself. We call a string $X$  \emph{primitive} when it cannot be represented as $Y^{h}$, for some string $Y$ and some integer $h\geq 2$.
We say that a string $X$ is a \emph{$d$-period with offset $\alpha$} of some other string $Y$ when
$ \dH(Y, X^\infty[\alpha\dd \alpha + |Y|)) \le d $
for some $d, \alpha \in \mathbb Z_{\ge 0}$;
and we call the elements of
$ \Mis(Y, X^\infty[\alpha\dd \alpha + |Y|)) $
the \emph{periodic mismatches}.
If $d = 0$, then $X$ is an \emph{exact period} of $Y$, or just a period.
Note that all cyclic shifts of $X$ are also (approximate or exact) periods, but with different offsets.

\paragraph{ED-string matching.}

Similarly as in \cite{bernardini2024elastic}, we define the following problem. Note that we assume integer $k$ to be a fixed constant
and not a part of the input.

\genproblem{Elastic Degenerate String Matching (EDSM) with $k$ Mismatches}{
  A string $P$ of length $m$ and
  an ED-string $\T$ of length $n$ and size $N \ge m$.
}{All positions in $\T$ where at least one approximate occurrence
(with at most $k$ mismatches) of $P$ ends.}

\noindent In the above problem, we call $P$ the \emph{pattern} and $\T$ the \emph{text}.

\paragraph{Active Prefixes Extension.}

As in previous works (cf.~\cite{bernardini2024elastic}), we solve \EDSMmis through the following auxiliary problem.

\genproblem{Active Prefixes Extension (APE) with $k$ Mismatches}{
A string $T$ of length $m$, $k+1$ bitvectors $U_{0}, U_{1}, \ldots, U_{k}$ of size $m$ each, and strings $P_{1},P_{2},\ldots$
of total length $N$.
}{$k+1$ bitvectors $V_{0}, V_{1}, \ldots, V_{k}$ of size $m$ each, where
$V_{d'}[j'] = 1$ if and only if there is a string $P_{i}$
and $j \in \{0, 1, \dd, m-1\}$ with $U_{d}[j] = 1$ such that $j' = j + |P_{i}|$ and $d' \ge d+\dH(P_i, T[j \dd j'])$.}

\noindent In the above problem, we call $T$ the \emph{text}, and $P_{1}, P_{2}, \ldots$ the \emph{patterns}.

\section{\EDSMmis via \APEmis}\label{sec:APE}

We begin by showing how to reduce \EDSMmis to multiple instances of \APEmis. This does not require any new ideas,
and proceeds similarly as in previous works (cf.~\cite{bernardini2024elastic}), so we only state it for completeness.

As we mentioned before, each approximate occurrence of pattern $P$ in ED-string
$\T$ is:
\begin{enumerate}
  \item either an approximate fragment of a string of a symbol; or
  \item crossing the boundary between two consecutive symbols.
\end{enumerate}
We explain how to detect the occurrences of each form separately.

\paragraph{Approximate fragments of symbols.}
To check if the pattern is an approximate fragment of a string of a symbol, we test each symbol of $\T$ separately (cf.~\cite{DBLP:journals/tcs/BernardiniPPR20}).
To this end, we apply the technique of Landau and Vishkin~\cite{LandauV86}, informally referred to as the \emph{kangaroo jumps}.
First, we preprocess the concatenation of all the symbols of $\T$ and the pattern with the following.

\begin{lemma}[suffix tree~\cite{Farach} with LCA queries~\cite{HarelT84}]
\label{lem:lce}
A string $T$ over a polynomial alphabet can be preprocessed in $\Oh(|T|)$ time to
allow computing the longest common prefix of any two suffixes $T[i\dd]$ and $T[j\dd]$ of $T$ in constant time.
\end{lemma}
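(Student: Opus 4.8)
The plan is to assemble two classical results that are cited in the statement: linear-time suffix tree construction over a polynomial alphabet, and constant-time lowest common ancestor (LCA) queries after linear preprocessing. First I would build the suffix tree $\mathcal{T}$ of $T$ (after appending a unique sentinel character smaller than every symbol of $\Sigma$, which guarantees that each suffix ends at a distinct leaf). Over a polynomial alphabet this takes $\Oh(|T|)$ time by Farach's algorithm, since the characters can be sorted in linear time by radix sort. Each suffix $T[i\dd]$ corresponds to exactly one leaf $\ell_i$, and I would store the map $i \mapsto \ell_i$ explicitly so that it can be queried in constant time.

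The crucial structural observation is that the longest common prefix of two suffixes $T[i\dd]$ and $T[j\dd]$ equals the \emph{string depth} of the node $\operatorname{LCA}(\ell_i,\ell_j)$, where the string depth of a node $v$ is the total length of the edge labels on the root-to-$v$ path. The path from the root to $\operatorname{LCA}(\ell_i,\ell_j)$ spells exactly the maximal common prefix of the two suffixes: any longer common prefix would force $\ell_i$ and $\ell_j$ into a common subtree strictly below the LCA, contradicting maximality. I would annotate every node with its string depth during a single depth-first traversal, accumulating edge-label lengths along the way, which costs $\Oh(|T|)$ in total.

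Finally I would preprocess $\mathcal{T}$ to support $\Oh(1)$-time LCA queries in $\Oh(|T|)$ time using the data structure of Harel and Tarjan. A query for the longest common prefix of $T[i\dd]$ and $T[j\dd]$ then reduces to looking up $\ell_i$ and $\ell_j$, computing $v=\operatorname{LCA}(\ell_i,\ell_j)$ in constant time, and returning the stored string depth of $v$; the entire query is thus $\Oh(1)$. Since both the suffix-tree construction and the LCA structure are used as black boxes, there is no substantial obstacle beyond verifying the string-depth/LCP correspondence and ensuring the sentinel does not contaminate the answer. The latter is automatic: two distinct suffixes $T[i\dd]$ and $T[j\dd]$ with $i\neq j$ have different lengths and hence diverge strictly before either sentinel is reached, so the computed value is a genuine longest common prefix over $\Sigma$.
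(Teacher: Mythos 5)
Your proof is correct and follows exactly the approach the paper intends (the lemma is stated as a black-box combination of Farach's linear-time suffix tree construction and Harel--Tarjan constant-time LCA, with the LCP equal to the string depth of the LCA of the two leaves); the paper gives no further proof, so your write-up simply supplies the standard details. One tiny imprecision: when one suffix is a prefix of the other, the two leaves diverge \emph{exactly} at the shorter suffix's sentinel rather than strictly before it, but since the sentinel differs from every character of $\Sigma$, the string depth of the LCA still equals the true longest common prefix, so the conclusion is unaffected.
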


\noindent Recall that pattern $P$ is of length $m$. For a symbol $X=\{ X_{1}, X_{2},\ldots,X_{t}\}$, we consider each string $X_{i}$ and, for every $j=0,1,\ldots,|X_{i}|-m$,
check if $\dH(X_{i}[j\dd j+m),P) \leq k$ in $\Oh(k)$ time by repeatedly computing the longest common prefix of the
remaining suffix of $X_{i}$ and $P$. This takes $\Oh(m+kN)$ time.

\paragraph{Crossing the boundary between two consecutive symbols.}
To check if the pattern approximately matches a fragment $\T[i\dd j]$, for some positions $i<j$, we reduce the problem to multiple instances of \APEmis.
We iterate through the symbols of $\T$ left-to-right and maintain
$k + 1$ bitvectors $B_0, B_{1}, \ldots, B_{k}$, each of size $m$,
such that $B_{d}[j] = 1$ when $P[\dd (j-1)]$ is a $d$-approximate suffix of
the current $\T[\dd i]$, for $d=0,1,\ldots,k$. Let $N_{i}$ be the size of $\T_{i}$, i.e. the total length of all strings in $\T_{i}$.

To proceed to the next iteration, and compute the bitvectors for
$\T[\dd i+1]$ from the bitvectors for $\T[\dd i]$, we need to consider two possibilities.
First, to consider the case when the $d$-approximate suffix is fully within $\T[i+1]$,
for every $d=0,1,\ldots,k$, we find all $d$-approximate prefixes of $P$ that are suffixes of $\T[i+1]$.
This is done in $\Oh(kN_{i+1})$ time by iterating over all strings in $\T[i+1]$, considering for each of them
every sufficiently short prefix of $P$, and computing the number of mismatches if it does not exceed $k$ using
kangaroo jumps in $\Oh(k)$ time.  
Second, to consider the case when the $d$-approximate suffix crosses the boundary between $\T[i]$ and $\T[i+1]$,
we create and solve an instance of \APEmis with the bitvectors representing the results for $\T[\dd i]$
and the strings in $\T[i+1]$.
We take as the new bitvectors the bitwise-OR of the bitvectors corresponding to both cases.

Before proceeding to the next iteration, we need to detect an occurrence that crosses the boundary between $\T[i]$ and $\T[i+1]$.
To this end, we consider each string $T\in \T[i+1]$. Then, for every $d=0,1\ldots,k$ and $j=0,1,\ldots,m-1$
such that $B_{d}[j]=1$ and $m-j \leq |T|$, we check if $P[j\dd]$ is a $(k-d)$-approximate prefix of $\T[i+1]$
using kangaroo jumps in $\Oh(k)$ time, and if so report position $i+1$ as a $k$-approximate occurrence.
Because we only need to consider $|T|+1$ possibilities for $j$, this takes $\Oh(kN_{i+1})$ time.

We summarise the complexity of the reduction in the following lemma.

\begin{lemma}
\label{lem:reduction}
Assume that \APEmis can be solved in $f_{k}(m,N)$ time. Then \EDSMmis can be solved in $\Oh(m+k\sum_{i}N_{i}+\sum_{i}f_{k}(m,N_{i}))$ time.
\end{lemma}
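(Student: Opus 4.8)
The plan is to formalize the two-case decomposition already sketched above and to verify that the costs of the individual steps sum to the claimed bound; the argument introduces no new ideas beyond careful accounting and one loop invariant. First I would build, once, the LCE data structure of \cref{lem:lce} on the concatenation of $P$ with all strings occurring in the symbols of $\T$. Its total length is $\Oh(m+N)$, so this preprocessing is absorbed into the $\Oh(m+k\sum_i N_i)$ term because $N\ge m$ and $\sum_i N_i = N$. With constant-time LCE queries in hand, every single Hamming-distance test we invoke runs in $\Oh(k)$ time via kangaroo jumps.

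For occurrences of the first type (an approximate fragment lying entirely inside one string of one symbol), I would argue correctness directly: for each string $X_i$ of each symbol and each starting offset $j$, testing $\dH(X_i[j\dd j+m),P)\le k$ with kangaroo jumps detects exactly the occurrences contained in a single string. Charging $\Oh(k)$ per offset and summing over all strings yields the $\Oh(m+kN)$ bound stated above, which is within the target.

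The heart of the proof is the boundary-crossing case, where the main obligation is to establish the loop invariant for the maintained bitvectors $B_0,\ldots,B_k$: after processing $\T[\dd i]$, we have $B_d[j]=1$ if and only if $P[\dd(j-1)]$ is a $d$-approximate suffix of the current $\T[\dd i]$ (reachable with at most $d$ mismatches). I would prove this by induction on $i$. In the inductive step, a $d$-approximate suffix of $\T[\dd i+1]$ either lies entirely inside a string of $\T[i+1]$ — captured by scanning the prefixes of $P$ against the strings of $\T[i+1]$ in $\Oh(kN_{i+1})$ time — or it crosses the boundary, in which case it factors as a previously recorded active prefix extended by a string of $\T[i+1]$; the latter is precisely an instance of \APEmis with text $P$, input bitvectors $B_0,\ldots,B_k$, and patterns the strings of $\T[i+1]$, solvable in $f_k(m,N_{i+1})$ time. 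Taking the bitwise-OR of the two contributions preserves the ``at most $d$ mismatches'' semantics and hence the invariant. Reporting is handled by the separate detection step: for each string $T\in\T[i+1]$ and each active $B_d[j]=1$ with $m-j\le|T|$, a single kangaroo check of whether $P[j\dd]$ is a $(k-d)$-approximate prefix decides whether a $k$-mismatch occurrence ends at position $i+1$, in $\Oh(kN_{i+1})$ time since only $|T|+1$ values of $j$ matter.

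It then remains to sum the per-iteration costs. The ``fully inside'' scans and the detection scans each cost $\sum_i \Oh(kN_{i+1})=\Oh(k\sum_i N_i)$, the APE instances cost $\sum_i f_k(m,N_{i+1})$, and the first-type and preprocessing work contributes $\Oh(m+kN)$. Since $\sum_i N_i=N$, these combine to $\Oh(m+k\sum_i N_i+\sum_i f_k(m,N_i))$, exactly the claimed bound. I expect the only genuinely delicate point to be stating the invariant precisely enough that the ``fully inside'' and boundary-crossing contributions provably cover every case without double-counting mismatches across the boundary; everything else is routine summation.
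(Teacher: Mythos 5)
Your proposal is correct and follows essentially the same route as the paper: LCE preprocessing with kangaroo jumps for occurrences inside a single string, the bitvector invariant for $d$-approximate active prefixes maintained symbol by symbol, the split into ``fully inside $\T[i+1]$'' versus ``crossing the boundary via an \APEmis instance with text $P$ and patterns the strings of $\T[i+1]$'', and the separate $\Oh(kN_{i+1})$-time reporting step. The cost accounting matches the claimed $\Oh(m+k\sum_i N_i+\sum_i f_k(m,N_i))$ bound exactly as in the paper.
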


\section{Faster \APEmis}
\label{sec:algo}

We now move to designing efficient algorithms for \APEmis, separately for $k=1$ and then any constant $k$.
Combined with the reduction described in Lemma~\ref{lem:reduction}, this will result in \Cref{thm:main} and \Cref{thm:main2}.
Recall that the input to an instance of \APEmis consists of a string $T$ (called the text) of length $m$ and a collection of strings
$P_{1},P_{2},\ldots$ (called the patterns) of total length $N$.

For $k=1$, the strings are partitioned depending on their lengths and parameters $B'$ and $B$ depending on $m$:
\begin{enumerate}
    \item Very Short Case: the length of each string is $\leq B'$,
    \item Short Case: the length of each string is $>B'$ and $\leq B$,
    \item Long Case: the length of each string is $>B$.
\end{enumerate}
We separately solve the three obtained instances of \APEmis and return the bitwise-OR of the obtained bitvectors.

For an arbitrary constant $k$, we will have only two cases:
\begin{enumerate}
    \item Short Case: the length of each string is $\leq B$,
    \item Long Case: the length of each string is $>B$.
\end{enumerate}

\subsection{Very Short Case (for $k=1$)}
\label{sec:very short}

Recall that this case is only required for $k=1$.
Before stating the algorithm, we need a few standard tools.

\paragraph{Suffix tree.}
A \emph{trie} is a (rooted) tree, where every edge is labeled with a single character.
Each node of a trie represents the string obtained by concatenating the labels on its path from the root.
We consider only deterministic tries, meaning that the labels of all edges outgoing from the same node are pairwise distinct.
Then, a \emph{compact trie} is obtained from a trie by collapsing maximal downward paths on which every inner
node has exactly one child. The remaining nodes called \emph{explicit}, and the nodes that have been removed
while collapsing the paths are called \emph{implicit}. In a compact trie, every edge is labeled with a nonempty string, and the first characters
of all edges outgoing from the same node are pairwise distinct.

The \emph{suffix tree} of a string $T[0\dd n-1]$ is the compact trie of all the suffixes of $T\$$, where $\$$ is a special character
not occurring anywhere in $T$~\cite{Weiner1973}. Thus, there are $n+1$ leaves in the suffix tree of $T$, and thus contains $\Oh(n)$
nodes and edge. The label of each edge of the suffix tree is equal to some fragment $T[i\dd j]$, and we represent it by storing $i$ and $j$,
thus the whole suffix tree needs only $\Oh(n)$ space. For constructing the suffix tree we apply the following result.

\begin{lemma}[\cite{Farach}]
\label{lem:suffix}
The suffix tree of a string $T[0\dd n-1]$ over polynomial alphabet can be constructed in $\Oh(n)$ time.
\end{lemma}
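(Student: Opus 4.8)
The statement is the classical linear-time suffix tree construction of Farach~\cite{Farach} over a polynomial alphabet, so the plan is to reproduce his divide-and-conquer recursion and check that every step outside the recursive call runs in $\Oh(n)$ time. First I would reduce the alphabet by grouping consecutive characters: form the $\Oh(n)$ pairs $(T[2i], T[2i+1])$ and radix-sort them. Because $\Sigma = \{1, \dd, \poly(n)\}$, this sort takes $\Oh(n)$ time, and replacing each pair by its rank yields a string $T'$ of length $\lceil n/2 \rceil$ over an alphabet of size $\Oh(n)$. Recursing on $T'$ produces its suffix tree, which I then unpack into the \emph{odd tree} $T_o$, the compact trie of all suffixes of $T$ that start at an odd position, by splitting each edge label back into pairs of original characters; this unpacking is a single linear-time pass over the tree.

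Next I would build the \emph{even tree} $T_e$, the compact trie of the suffixes starting at even positions. Every even suffix is a single character $T[2i]$ followed by the odd suffix $T[2i+1 \dd]$, whose lexicographic rank is read off from the leaf order of $T_o$. Sorting the even suffixes by the pair (first character, rank of the following odd suffix) is therefore an $\Oh(n)$ radix sort, and the depth at which two lexicographically adjacent even suffixes branch equals one plus the string depth of the lowest common ancestor in $T_o$ of the two following odd suffixes. Having preprocessed $T_o$ for constant-time LCA queries using the algorithm of Harel and Tarjan~\cite{HarelT84}, I can recover all these branching depths in $\Oh(n)$ time and assemble $T_e$ from them.

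The final and most delicate step is to merge $T_o$ and $T_e$ into the suffix tree of $T$. I would run a coupled top-down traversal of the two trees that greedily glues together any two outgoing edges whose labels begin with the same character. This can \emph{overmerge}: two edges may share only a proper prefix, so the merged structure has to be corrected by splitting such edges at the true point of divergence. The point of divergence of an odd suffix and an even suffix is exactly the length of their longest common prefix, which is obtained in constant time from an LCA query (again via~\cite{HarelT84}) once both suffixes sit in a common tree, so all repairs cost $\Oh(n)$ in total. The main obstacle is precisely this unmerging argument: showing that the overmerged tree can be repaired without rescanning characters, and that each edge is split at most a constant number of times, is the heart of the construction. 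With every non-recursive step bounded by $\Oh(n)$, the total running time obeys the recurrence $f(n)=f(\lceil n/2\rceil)+\Oh(n)$, which solves to $\Oh(n)$, as claimed.
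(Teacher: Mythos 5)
The paper offers no proof of this lemma at all: it is imported verbatim as a black-box citation of Farach's construction, so the only meaningful comparison is between your sketch and the cited algorithm itself. Your outline follows Farach's odd/even divide-and-conquer faithfully --- pair-and-rank with radix sort over the polynomial alphabet, recursion on the half-length string, recovery of the odd tree, construction of the even tree from the sorted order and LCA-derived branching depths, and a final merge --- and the recurrence $f(n)=f(\lceil n/2\rceil)+\Oh(n)$ is the right way to conclude.

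There is, however, a genuine gap in the one step you yourself single out as the heart of the construction. You claim that the true point of divergence of an odd suffix and an even suffix ``is obtained in constant time from an LCA query once both suffixes sit in a common tree.'' The only common tree available at that moment is the \emph{overmerged} tree, and its node depths are exactly the quantities you are trying to correct: the string depth of the LCA in the overmerged tree is an overestimate of the longest common prefix whenever overmerging has occurred, so querying it is circular. Farach's actual repair introduces an auxiliary pointer $d(\cdot)$ on each merged node (linking it to a related merged node whose correct depth determines its own), and computes the true depths by chasing these pointers, with an amortized argument showing that the total work over all merged nodes is linear; only after this correction do the LCA-based longest-common-prefix queries become valid. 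Without that machinery (or an equivalent substitute), the unmerging step as you describe it does not go through, so the proposal as written does not constitute a complete proof of the $\Oh(n)$ bound --- though it is an accurate map of where the difficulty lies.
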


The suffix tree of $T$, denoted $ST_{T}$, allows us to easily check if any string $X$ is a substring of $S$ by starting at the root and simulating
navigating down in the trie storing the suffixes of $T$. In every step, we need to choose the outgoing edge labeled by the
next character $X[i]$. If the current node is implicit, this is trivial to implement in constant time. Otherwise, we might
have multiple outgoing edges, and we need to store the first characters of their edges in an appropriate structure.
To this end, we use deterministic dictionaries.
 
\begin{lemma}\cite[Theorem 3]{Ruzic2008}
\label{lem:dictionary}
Given a set $S$ of $n$ integer keys, we can build in $\Oh(n(\log\log n)^{2})$ time a structure of size $\Oh(n)$, such that
given any integer $x$ we can check in $\Oh(1)$ time if $x\in S$, and if so return its associated information.
\end{lemma}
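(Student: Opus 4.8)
The plan is to build a static perfect hash function for $S$ via the classical two-level scheme of Fredman, Komlós and Szemerédi, and to invest essentially all the effort into selecting the hash functions \emph{deterministically} and \emph{quickly}. At the top level I would hash the $n$ keys into a table of $\Oh(n)$ buckets; within a bucket holding $b$ keys I would use a second-level table of size $\Oh(b^2)$ together with a collision-free hash function, so that the total second-level space telescopes to $\Oh(n)$ via the standard $\sum_j b_j^2 = \Oh(n)$ bound while every lookup costs $\Oh(1)$. For the hash functions themselves I would use the multiplicative family $h_a(x) = (a\cdot x \bmod 2^w)\gg(w-\ell)$ of Dietzfelbinger, which for a random odd multiplier $a$ incurs few collisions and, crucially, is evaluable in constant time on the Word RAM; returning associated information is then immediate by storing it alongside each key in the resolved table.

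The only non-trivial ingredient is producing good multipliers without randomness. Here I would fix the bits of $a$ one at a time, from the most significant downward, using the method of conditional expectations: after committing to a prefix of the bits, each remaining pair of keys is either already separated, already colliding, or still undecided, and I would extend the prefix by whichever bit keeps the \emph{expected} number of surviving collisions below the threshold guaranteed by the probabilistic bound. To make each such decision cheap I would exploit word-level parallelism, packing many keys (or the differences of the still-colliding pairs) into a single $\Omega(\log n)$-bit word and updating all of their collision statuses simultaneously; this is precisely the \emph{signature}/packing trick that lets one test a candidate bit of $a$ against a whole bucket in time proportional to the number of words rather than the number of keys. Applying this first to reduce the universe to size $\poly(n)$, then at the top level, and finally inside every oversized bucket yields a dictionary of size $\Oh(n)$ supporting $\Oh(1)$ queries.

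The main obstacle, and the source of the $(\log\log n)^2$ factor, is the running time of the derandomisation rather than its correctness: one has to argue that the conditional-expectation statistics can be maintained across the $\Oh(\log\log n)$ rounds of bit-fixing \emph{without} rescanning all keys each round, and balancing the word-packing width against the number of rounds is what produces the two $\log\log n$ factors. I would treat this quantitative charging argument as the technical heart and otherwise rely on the standard Fredman--Komlós--Szemerédi guarantees for the $\Oh(n)$ space and $\Oh(1)$ query bounds. Since the statement is quoted verbatim as Theorem 3 of Ruzic, the cleanest route in the paper is in fact to verify that our keys respect the Word RAM assumptions underlying that result and then invoke it directly, the sketch above being how one would reconstruct its proof.
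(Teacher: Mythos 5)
The paper offers no proof of this lemma at all: it is imported verbatim as Theorem~3 of Ru\v{z}i\'c and used as a black box (to store the first characters of outgoing edges at each explicit node of the suffix tree). Your closing remark --- check that the keys satisfy the Word RAM assumptions and invoke the citation --- is therefore exactly what the paper does, and for the paper's purposes that is the whole ``proof''; the keys here are characters from a polynomial alphabet, so they fit in $\Oh(1)$ words and the hypotheses are met. Your reconstruction sketch is a reasonable outline of the right \emph{genre} of argument (a two-level FKS scheme with deterministically selected multiplicative hash functions, derandomized by bit-fixing with word-packed collision statistics), but be aware that it is not a proof of the stated bound: the entire technical content of Ru\v{z}i\'c's result is precisely the charging argument you defer in your third paragraph, namely how to maintain the conditional-expectation statistics across the rounds of universe reduction without rescanning the keys, which is what separates the $\Oh(n(\log\log n)^{2})$ construction time from the easier $\Oh(n\log n)$ bound of Hagerup, Miltersen and Pagh. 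Your sketch as written would justify $\Oh(1)$ queries and $\Oh(n)$ space, but not the claimed preprocessing time, so if this lemma were to be proved rather than cited, that accounting would have to be carried out in full; since it is cited, no such obligation arises.
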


We apply Lemma~\ref{lem:dictionary} at each explicit node of the suffix tree. This takes $\Oh(n(\log\log n)^{2})$ total time,
and then allows us to implement navigating down in $\Oh(|X|)$ total time. This also gives us, for each prefix $X[\dd i]$,
its unique identifier: if we are at an explicit node then it is simply its preorder number, and otherwise it is a pair consisting of the
preorder number of the nearest explicit ancestor and the length of the current prefix. If in any step we cannot proceed further,
we set the identifier to null denoting that the prefix does not occur in $S$.
Such identifiers have the property that the identifier of $X[\dd i]$ is null if and only if $X[\dd i]$ does not occur in $S$,
and the identifiers of $X[\dd i]$ and $Y[\dd j]$ that both occur in $S$ are equal if and only if if the strings themselves are equal.
Further, we can think that each identifier is an integer from $\{1,2,\ldots,n^{2}\}$.

\begin{restatable}[]{theorem}{veryshort}
\label{thm:veryshort}
An instance of \APEmis where $k=1$ and the length of each pattern is at most $B'$ can be solved in $\Oh(m(B')^{2}+N)$ time.
\end{restatable}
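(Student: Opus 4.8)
\textbf{Proof proposal for \Cref{thm:veryshort}.}

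The plan is to exploit the fact that patterns are extremely short (length at most $B'$) and $k=1$, so that I can afford to precompute, for every possible length $\ell \le B'$ and every position $j$ of the text $T$, a compact identifier of the length-$\ell$ window $T[j\dd j+\ell)$ and also handle the single-mismatch case by brute force over which position is allowed to mismatch. The key realization is that, because each pattern is so short, there are only $\Oh(mB')$ distinct text windows to consider, and each window carries one of the integer identifiers built from the suffix tree of $T$ via \Cref{lem:suffix}, \Cref{lem:dictionary}. First I would build $ST_T$ and equip every explicit node with the deterministic dictionary, so that in $\Oh(|X|)$ time I can map any string $X$ to its integer identifier in $\{1,\dd,n^2\}$ (null if $X$ does not occur in $T$). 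This lets me group all patterns of a fixed length $\ell$ by equality in $\Oh(N)$ total time.

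Next I would handle the exact ($0$-mismatch) part. For each length $\ell$, I sweep the text and for each position $j$ compute the identifier of $T[j\dd j+\ell)$; then an exact occurrence of a pattern ending at $j' = j+\ell$ exists iff that identifier matches the identifier of some pattern of length $\ell$. Using a lookup table indexed by (length, identifier), I can, for every text position $j$ with $U_0[j]=1$, propagate a one to $V_0[j+\ell]$ and to $V_1[j+\ell]$ for every pattern length $\ell \le B'$ that actually occurs exactly there. Summing over all positions and all lengths, this is $\Oh(mB')$ work plus $\Oh(N)$ preprocessing. The subtlety is that the \APEmis output condition requires $d' \ge d + \dH$, so an exact match of a pattern propagates from $U_0$ into both $V_0$ and $V_1$, and from $U_1$ into $V_1$; I must be careful to aggregate across all four combinations of input/output bitvector indices consistently with the definition.

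For the single-mismatch part, the idea is that a pattern $P_i$ matches $T[j\dd j+|P_i|)$ with exactly one mismatch iff $P_i$ and the window agree except at one position $p$. I would fix the mismatch position conceptually and instead compare, for each window length $\ell$ and each split point $p \in \{0,\dd,\ell-1\}$, the prefix $T[j\dd j+p)$ and suffix $T[j+p+1\dd j+\ell)$ of the window against the corresponding prefix and suffix of a pattern. Concretely, I precompute for every pattern the identifiers of all its prefixes (from $ST_T$) and all suffixes of its reversal (from a suffix tree of $T^r$), which takes $\Oh(N)$ time total; likewise for text windows I can obtain prefix and suffix identifiers of $T[j\dd j+\ell)$. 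Then a one-mismatch occurrence at $(j,\ell)$ reduces to: there exists a pattern of length $\ell$ whose prefix of length $p$ and suffix of length $\ell-p-1$ have matching identifiers with those of the window, for some $p$. I would bucket patterns by the pair (prefix-identifier at $p$, suffix-identifier after $p$), hashing over all $p$; because each pattern has length $\le B'$, this generates $\Oh(B')$ (prefix, suffix) keys per pattern, hence $\Oh(B')$ per unit length and $\Oh(N)$ in total across patterns, and on the text side I probe $\Oh(B')$ keys per window, giving $\Oh(mB' \cdot B') = \Oh(m(B')^2)$ probes overall. Each successful probe means some pattern of length $\ell$ matches the window with at most one mismatch, so from $U_0[j]=1$ I set $V_1[j+\ell]=1$ (and from $U_1$ this would exceed $k=1$, so it contributes nothing).

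The main obstacle I anticipate is making the one-mismatch enumeration truly $\Oh(m(B')^2+N)$ rather than paying $\Oh(m(B')^2)$ per distinct identifier or blowing up the $N$ term by a factor of $B'$. The delicate accounting is that summing $\Oh(\ell)$ split points over all patterns telescopes to $\Oh(N)$ (since $\sum_i |P_i| = N$) on the pattern side, while on the text side the $\Oh(B')$ windows-per-position times $\Oh(B')$ splits-per-window gives the $\Oh(m(B')^2)$ term; I must ensure the bucketing keys are integers bounded polynomially (they are, being pairs from $\{1,\dd,n^2\}$), so that I can use the deterministic dictionary of \Cref{lem:dictionary} and avoid randomized hashing, and that building and querying these buckets stays within the stated bound. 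I would finish by taking the bitwise-OR of the exact and one-mismatch contributions into $V_0,V_1$, which is immediate.
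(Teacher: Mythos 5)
Your proposal matches the paper's proof essentially step for step: suffix trees of $T$ and $T^r$ with deterministic dictionaries yielding substring identifiers, exact occurrences handled by looking up window identifiers over all $\Oh(mB')$ (position, length) pairs, and one-mismatch occurrences handled by pairing the identifier of the prefix before the mismatch with the identifier of the reversed suffix after it, giving $\Oh(N)$ keys on the pattern side and $\Oh(m(B')^2)$ probes on the text side. The only divergence is that you look the keys up with the deterministic dictionary of \Cref{lem:dictionary}, whose $\Oh(N(\log\log N)^2)$ construction over the pattern-side keys would slightly exceed the stated $\Oh(N)$ term; the paper instead batches all membership queries and answers them together with radix sort (the keys are integers, or pairs of integers, bounded by $m^2$) to keep the bound clean.
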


\begin{proof}
We assume that the length of each pattern $P_{i}$ is at most $B'$.
Recall we are given bitvectors $U_{0}, U_{1}$ and the goal is to compute the bitvectors $V_{0}, V_{1}$.
This will be done by explicitly listing all fragments $T[j \dd j')$ such that $\dH(T[j \dd j'), P_{i}) = 0$, for some $i$,
and $\dH(T[j \dd j'), P_{i}) = 1$, for some $i$. For each such a fragment, we propagate the appropriate information from
the input bitvectors to the output bitvectors.

We begin with constructing the suffix trees of $T$ and $T^{r}$, including the dictionary structures
at each explicit node. Then, we distinguish two cases as follows.

\paragraph{Exact occurrences.}
For each string $P_{i}$, we find its identifier in $ST_{T}$ in $\Oh(|P_{i}|)$ time. If the identifier is non-null then
we include it in the set $S$.

We iterate over every position $j=0,1,\ldots,m-1$ and length $\ell=1,2,\ldots,\min\{B',m-j\}$.
While we iterate over the lengths, we simultaneously navigate in $ST_{T}$ to maintain the identifier of
$T[j\dd (j+\ell))$. To check if $T[j\dd (j+\ell)) = P_{i}$ for some $i$, we thus need to check if a specific identifier
belongs to $S$. Recall that the identifiers are integers from $\{1,2,\ldots,m^{2}\}$.
To avoid the necessity of paying extra logarithmic factors or using randomization, we answer all such queries together.
In more detail, we gather all the queries. Then, we sort the elements of $S$ and the queries together
with radix sort. Then, we scan the obtained sorted list and obtain the answer to each query in linear total time.
Finally, if $T[j\dd (j+\ell)) = P_{i}$ for some $i$ and $U_{d}[j]=1$ then we set $V_{d}[j]=1$, for $d=0,1$.

\paragraph{Occurrences with one mismatch.}
For each string $P_{i}$, we iterate over every position $j=0,1,\ldots,|P_{i}|-1$, assuming that the mismatch is there.
We would like to have access to the identifier of $P_{i}[\dd (j-1)]$ and $P_{i}[(j+1) \dd]$. This can be guaranteed
by first navigating in $ST_{T} $to compute the identifier of every prefix $P_{i}[\dd j]$ in $\Oh(|P_{i}|)$ time,
and similarly navigating in $ST_{T^{r}}$ to compute the identifier of the reversal of every suffix $(P_{i}[j\dd])^{r}$.
After such a preliminary step, for every position $j=0,1,\ldots,|P_{i}|-1$, if both identifiers are non-null then
we form a pair consisting of the identifier of $P_{i}[\dd (j-1)]$ and the identifier of $(P_{i}[(j+1) \dd])^{r}$. Let $S$
denote the obtained set of pairs.

We iterate over every position $j=0,1,\ldots,m-1$, $j'=j,j+1,\ldots,m-1$ and $j''=j',j'+1,\ldots,m-1$,
where $T[j\dd j'']$ is the considered fragment and $j'$ is the position of the mismatch.
We would like to have access to the identifier of $(T[j\dd j'))^{r}$ in $ST_{T_{r}}$ and the identifier of $T(j' \dd j'']$ in $ST_{T}$.
This can be assumed without increasing the time complexity by first iterating over $j'$ (in any order),
then over $j''$ in the increasing order, and finally over $j'$ in decreasing order, all while simultaneously
navigating in $ST_{T}$ and $ST_{T^{r}}$, respectively.
With the identifiers at hand, we need to check if the pair consisting of the identifier of $T(j' \dd j'']$ and
the identifier of $(T[j\dd j'))^{r}$ belongs to $S$. 
Similarly as for exact occurrences, this is done by answering the queries together with radix sort.
Then, if $\dH(T[j\dd j''],P_{i}) \leq 1$ for some $i$ and $U_{0}[j]=1$ then we set $V_{1}[j]=1$.

\paragraph{Summary.}
The algorithm described above consists of the following steps.
First, we need to construct the suffix trees of $T$ and $T^{R}$ in $\Oh(m)$ time. Constructing the deterministic dictionaries
storing the outgoing edges takes $\Oh(m(\log\log m)^{2})$ time.
Second, listing and processing the exact occurrences takes $\Oh(m B' + N)$ time.
Third, listing and processing occurrences with one mismatch takes $\Oh(m(B')^{2} + N)$ time.
\end{proof}

\subsection{Short Case}
\label{sec:short}

Recall that we are given the patterns $P_{1},P_{2},\ldots,P_{d}$ of total length $N$ and the text $T[0\dd (m-1)]$.
Further, the length of each pattern $P_{i}$ is at least $B'$ but at most $B$.
We start with observing that, after $\Oh(m+N)$-time preprocessing, we can assume that $\log d = \Oh(k\log m)$,
because we only need to keep patterns $P_{i}$ such that $\dH(T[j\dd j'],P_{i})\leq k$, for some fragment $T[j\dd j']$. This relates the number $d$ of the patterns to $k$ and $m$. Then, we state another known tool, and
finally provide the algorithm.

\paragraph{Reducing the number of patterns.}
Recall that we only need to keep patterns $P_{i}$ such that $\dH(T[j\dd j'],P_{i})\leq k$
for some fragment $T[j\dd j']$. To reduce the number of patterns, we first run the following procedure.

\begin{lemma}
\label{lem:partition}
Assuming access to the suffix tree of $T[0\dd (m-1)]$, given a pattern $P[0\dd (n-1)]$ we can check if
$\dH(T[j\dd j'],P)\leq k$ for some fragment $T[j\dd j']$, and if so represent $P$ as a concatenation of $2k+1$
fragments of $T$ and special characters not occurring in $T$.
\end{lemma}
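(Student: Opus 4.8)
The plan is to reduce the task to finding a single $k$-mismatch occurrence of $P$ in $T$, from which the required representation is read off directly. Suppose $\dH(P, T[j\dd j+n)) \le k$ and let $i_1 < \dots < i_t$ (with $t \le k$) be the mismatch positions, i.e. $\Mis(P, T[j\dd j+n)) = \{i_1,\dots,i_t\}$. Then $P$ splits as $P[0\dd i_1)\,P[i_1]\,P[i_1{+}1\dd i_2)\cdots P[i_t]\,P[i_t{+}1\dd n)$, where every maximal block $P[a\dd b)$ between two consecutive mismatches satisfies $P[a\dd b) = T[j{+}a\dd j{+}b)$ and is therefore a fragment of $T$ (stored as the endpoint pair $j{+}a, j{+}b$), while each $P[i_r]$ is replaced by a special character not occurring in $T$. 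This yields at most $t{+}1 \le k{+}1$ fragments interleaved with $t \le k$ special characters, hence at most $2k{+}1$ pieces in total, so the whole difficulty lies in the check and in locating one such $j$.

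For the check I would first compute, using the suffix tree of $T$, the greedy left-to-right decomposition of $P$ into maximal substrings of $T$ separated by single break characters: starting at the root of $ST_T$, follow $P[0],P[1],\dots$; upon reaching a character $P[e]$ that cannot be matched, emit the fragment $P[\text{start}\dd e)$, declare $P[e]$ a special character, and restart the descent from the root at position $e{+}1$. Navigation costs $\Oh(1)$ per character with the deterministic dictionaries of \cref{lem:dictionary}, and since every position of $P$ is advanced over exactly once, the decomposition is produced in $\Oh(n)$ time. This greedy parse is already a representation of the required form, and the key claim is that its number of breaks is minimal among all parses of this shape (an exchange argument: any valid parse has its first break no later than the greedy one, since the greedy first fragment is the longest substring-of-$T$ prefix of $P$, so one can always advance the greedy parse without falling behind). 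Consequently, if a $k$-mismatch occurrence exists then its mismatch positions give a valid parse with at most $k$ breaks, whence the greedy parse has at most $k$ breaks; contrapositively, more than $k$ greedy breaks certifies that no $k$-mismatch occurrence exists and $P$ may be discarded.

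The main obstacle is that this greedy test is only a \emph{necessary} condition: a pattern can decompose into few substrings of $T$ without admitting any single consistent alignment (for instance $T=\texttt{abcabc}$, $P=\texttt{caca}$, $k=1$). For the intended application this is harmless, since keeping a superset of the truly matching patterns never activates a wrong output position, and there are only $m^{\Oh(k)}$ distinct decompositions into at most $2k{+}1$ fragments (each a pair of endpoints in $[0,m)$) and special characters, which already gives the bound $\log d = \Oh(k\log m)$. If the exact condition $\dH(T[j\dd j'],P)\le k$ is genuinely required, I would keep only the patterns surviving the greedy filter and then verify a true alignment on each survivor by Landau--Vishkin kangaroo jumps: preprocessing $T$ together with the patterns for longest-common-extension queries via \cref{lem:lce}, each candidate start $j$ is tested with at most $k{+}1$ constant-time extensions, and the first $j$ attaining at most $k$ mismatches both confirms the occurrence and supplies the mismatch positions $i_1,\dots,i_t$ needed for the representation above. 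The delicate quantitative point is then bounding the number of offsets $j$ that must be probed, which is exactly where the structural characterization exploited in the long case would be invoked.
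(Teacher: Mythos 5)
Your proposal is correct and rests on the same core idea as the paper: a greedy left-to-right parse of $P$ into longest prefixes that are substrings of $T$, computed by repeated descents in the suffix tree. The differences are in the accounting. You insert a break character after \emph{every} greedy fragment and prove optimality of the greedy parse by a staying-ahead exchange argument, concluding that a $k$-mismatch occurrence forces at most $k$ breaks, hence at most $k+1$ fragments. The paper's procedure only treats characters absent from $T$ as special; its counting argument is that for consecutive greedy fragments the string $F_jF_{j+1}[1]$ is not a substring of $T$, so each such boundary forces a mismatch under any fixed alignment, and the boundaries for $j=1,3,5,\dots$ are disjoint, yielding at most $2k+1$ pieces. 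Your argument is arguably cleaner and gives a slightly tighter bound; just record a break character that does occur in $T$ as a length-one fragment of $T$, so that two patterns with identical representations have equal Hamming distance to every fragment of $T$, which the subsequent deduplication relies on. Your observation that the greedy test is only a \emph{necessary} condition (the $T=\texttt{abcabc}$, $P=\texttt{caca}$ example) is accurate and applies equally to the paper's own proof, which also establishes only one direction; as you note, this is harmless since the filter is used solely to bound $d$ by $m^{\Oh(k)}$, and the true distances are computed later by the $k$-errata trie, so your final verification step via kangaroo jumps is not needed.
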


\begin{proof}
We repeat the following step $2k+1$ times, starting with the whole $P[0\dd (n-1)]$.
For the current suffix
$P[j\dd]$, we navigate the suffix tree of $T$ starting from the root to find its longest prefix $P[j\dd j']$ equal to some substring
of $T$. Then, we cut off $P[j\dd j']$ and repeat.
As a special case, if $P[j]$ does not occur in $T$, we cut off the first character of $P[j\dd]$. 
We claim that if after $k+1$ steps the remaining suffix is non-empty then $P$ cannot be within Hamming distance at most $k$
from some fragment of $T$.
Assume otherwise. Each character not occurring in $T$ incurs one mismatch.
Each maximal range of fragments of $T$ created by our procedure, denoted by $F_{1} F_{2}\ldots F_{t}$,
has the property that each $F_{i}$ is a fragment of $T$, but $F_{j} F_{j+1}[1]$ is not (for $j<t$).
Thus, each $F_{j} F_{j+1}[1]$ incurs at least one mismatch, and those mismatches are disjoint for $j=1, 3, 5, \ldots$.
Summing up, if the number of mismatches is at most $k$ then the number of steps is at most $2k+1$.
\end{proof}

We build the suffix tree of $T$ with Lemma~\ref{lem:suffix}, and then apply Lemma~\ref{lem:partition} on every pattern
$P_{i}$ in $\Oh(m+N)$ total time. Then, we only keep patterns $P_{i}$ such that $\dH(T[j\dd j'],P)\leq k$ for some fragment $T[j\dd j']$.
Further, each such pattern is represented as a concatenation of $\Oh(k)$ fragments of $T$ and special characters not occurring
in $T$, and thus can be described by specifying $\Oh(k)$ integers from $\{0,1,\ldots,m\}$, with every fragment represented
by two such integers, and every character not occurring in $T$ represented by $0$.
Among all patterns with the same representation, it is enough to keep only one, which ensures
that $d=\Oh(m^{\Oh(k)})$, so $\log d=\Oh(k\log m)$.

\paragraph{The $k$-errata trie.}
Cole, Gottlieb, and Lewenstein~\cite{coleetal} considered the problem of preprocessing a dictionary of $d$ patterns $P_{1},P_{2},\ldots,P_{d}$
of total length $N$ for finding, given a query string $Q$ of length $m$, whether $Q$ is at Hamming distance at most $k$ from some $P_{i}$.
We provide a brief overview of their approach, following the exposition in~\cite{gawrychowskihal} that provides some details
not present in the original description.

For $k=0$, this can be of course easily solved with a structure of size $\Oh(N)$ and query time $\Oh(m)$ by arranging the patterns
in a trie. For larger values of $k$, the $k$-errata trie is defined recursively. In every step of the recursion, the input is a collection
of $x \leq d$ strings, each of them being a suffix of some pattern $P_{i}$ decorated with its mismatch budget initially set to $k$.
We arrange the strings in a compact trie, and then recurse guided by the heavy-path decomposition of the trie.
The depth of the recursion is $k$, and on each level the overall number of strings increases by a factor of $\log d$, starting from $d$.
Answering a query requires the following primitive: given a node of one of the compact tries and the remaining suffix of the query string
$Q[i\dd]$, we need to navigate down starting from the given node while reading off the subsequent characters of $Q[i\dd]$. This needs
to be done while avoiding explicitly scanning $Q[i\dd]$, as such a primitive is invoked multiple times. For a compact trie
storing $x$ suffixes of the patterns, such a primitive can be implemented by a structure of size $\Oh(x\log x)$ and query time
$\Oh(\log\log N)$, assuming that we know the position of every suffix of the query string in the suffix tree of
$P_{1}\$_{1} P_{2}\$_{2} \ldots P_{d}\$_{d}$ (also known as the generalized suffix tree of $P_{1}, P_{2}, \ldots, P_{d}$).

In our application, the query string will be always a fragment of the text $T[i\dd j]$. Thus, we can guarantee that the position
of every suffix of the query string in the generalized suffix tree of $P_{1}, P_{2}, \ldots, P_{d}$ is known by building the generalized
suffix tree of $P_{1}, P_{2}, \ldots, P_{d}, T$. This gives us the position of every suffix $T[i\dd]$ in the generalized suffix tree
$P_{1}, P_{2}, \ldots, P_{d}$, from which we can infer the position of $T[i\dd j]$.
We summarise the properties of such an implementation below.

\begin{lemma}[\cite{coleetal}]
\label{lem:errata}
For any constant $k$, a dictionary of $d$ patterns $P_{1},P_{2},\ldots,P_{d}$ of total length $N$ and a text $T[0\dd (m-1)]$ can be
preprocessed in $\Oh(m+N+d\log^{k+1}d)$ time to obtain a structure of size $\Oh(m+N+d\log^{k}d)$, such that
for any fragment $T[i\dd j]$ we can check in $\Oh(\log^{k}d\log\log N)$ time whether $\dH(T[i\dd j],P_{i}) \leq k$, for some $i$.
\end{lemma}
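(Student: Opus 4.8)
The plan is to reconstruct the $k$-errata trie of Cole, Gottlieb, and Lewenstein, instantiating the navigation primitive exactly as in the paragraph preceding the statement so that the cost per compact trie is $\Oh(\log\log N)$ rather than proportional to the matched length. First I would build the generalized suffix tree of $P_{1}\$_{1}\cdots P_{d}\$_{d}T$ by applying \cref{lem:suffix} to the concatenation, in $\Oh(m+N)$ time; this records, for every suffix of every pattern and for every suffix $T[i\dd]$ of the text, its locus in the tree. Augmenting the tree for constant-time LCA as in \cref{lem:lce} then lets me compute the longest common prefix of any two such suffixes in constant time, and hence locate the locus of any fragment $T[i\dd j]$ in constant time after preprocessing. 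The point of doing this up front is that, since every query string will be a fragment of $T$, all loci needed during navigation are available for free.

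Next I would set up the recursion defining the errata tries. At recursion depth $r\in\{0,\dots,k\}$ the input is a collection of at most $x$ suffixes of patterns, each carrying a remaining mismatch budget $k-r$; I arrange them in a compact trie, using \cref{lem:dictionary} at branching nodes so child lookup is $\Oh(1)$, and take its heavy-path decomposition. At every branching node on a heavy path, for each off-path (light) child I create error copies of the corresponding pattern suffixes in which one mismatch is charged to the diverging character, and I recurse on these copies with the budget reduced by one. Because a root-to-leaf path crosses $\Oh(\log x)=\Oh(\log d)$ light edges, each string spawns $\Oh(\log d)$ copies per level; starting from $d$ strings at depth $0$, the tries at depth $r$ hold $\Oh(d\log^{r}d)$ strings in total. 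Summing over $r\le k$, the tries together contain $\Oh(d\log^{k}d)$ nodes, which with the base suffix tree gives the stated space $\Oh(m+N+d\log^{k}d)$; constructing all the tries together with their per-trie navigation structures accounts for the extra $\log d$ factor in the preprocessing time $\Oh(m+N+d\log^{k+1}d)$.

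The technical heart, and the step I expect to be the main obstacle, is the navigation primitive: given an explicit node $u$ of one of the compact tries and a suffix $T[i\dd]$ of the query fragment, descend as far as $u$ and $T[i\dd]$ agree in $\Oh(\log\log N)$ total time, \emph{without} rescanning $T[i\dd]$ character by character. The key observation is that the strings below $u$ are all suffixes of patterns, and the locus of $T[i\dd]$ in the generalized suffix tree is already known, so matching the trie path below $u$ against $T[i\dd]$ reduces to relating $u$ to that locus via a weighted-ancestor/predecessor query: among the strings stored below $u$, find the one whose longest common prefix with $T[i\dd]$ is largest and land at the corresponding implicit or explicit trie node. I would support this with a predecessor structure over the trie aligned with the suffix tree, of size $\Oh(x\log x)$ per trie and query time $\Oh(\log\log N)$ (the $\log\log N$ being predecessor search over integers in $[N]$); this is exactly where the $\log\log N$ in the query bound originates. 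Care is needed because an error copy has a blanked (wildcard) position that breaks the correspondence with a single suffix of $T$, so the descent must split at that position and restart the primitive on the tail $T[i'\dd]$, whose locus is likewise precomputed.

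Finally, answering a query for $T[i\dd j]$ starts at the root of the top-level trie with budget $k$ and alternates descents, via the primitive, with branch decisions: whenever a descent stops at a light branching node with budget remaining, we may continue along a matching light edge at no cost or jump into the corresponding error copy in the next-level trie, spending one unit of budget. By the heavy-path structure there are $\Oh(\log d)$ candidate error trees to try at each of the $k$ levels, so the computation enters $\Oh(\log^{k}d)$ tries in total, each costing one $\Oh(\log\log N)$-time primitive; multiplying gives the claimed query time $\Oh(\log^{k}d\log\log N)$. The fragment passes the test precisely when some such computation consumes all of $T[i\dd j]$ and reaches a leaf representing a complete pattern $P_{i}$ (necessarily with $|P_{i}|=j-i+1$) having spent at most $k$ mismatches, which is exactly the condition $\dH(T[i\dd j],P_{i})\le k$ for some $i$.
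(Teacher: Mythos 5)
Your reconstruction matches the paper's treatment of this lemma: the paper cites it from Cole, Gottlieb, and Lewenstein and only sketches the same ingredients you use (generalized suffix tree of $P_1\$_1\cdots P_d\$_d$ together with $T$ so that all loci of suffixes of $T$ are known, heavy-path-guided recursion of depth $k$ with a $\log d$ blow-up per level, and the $\Oh(x\log x)$-size, $\Oh(\log\log N)$-time descent primitive). Your filled-in details, including the handling of the wildcard position in error copies by restarting the primitive at a precomputed locus of a later suffix of $T$, are consistent with that sketch and with the stated bounds.
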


\begin{theorem}
\label{thm:short}
For any constant $k$, an instance of \APEmis where the length of each pattern is at least $B'$ and at most $B$ can be solved in
$\Oh(mB\log^{k}m\log\log m+N+N/B'\cdot\log^{k+1}m)$ time.
\end{theorem}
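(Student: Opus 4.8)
The plan is to reduce the whole computation to $\Oh(mB)$ exact-minimum-mismatch queries of fragments of $T$ against the patterns, each answered with the $k$-errata trie of \cref{lem:errata}. First I would reinterpret the required output. For a fixed ending position $j'$, the set $\{d':V_{d'}[j']=1\}$ is upward closed up to $k$, since the defining inequality $d'\ge d+\dH(P_i,T[j\dd j'))$ only becomes easier as $d'$ grows; hence it suffices to compute the threshold $\mu(j')$ equal to the minimum of $d+\dH(P_i,T[j\dd j'))$ over all triples $(i,j,d)$ with $U_d[j]=1$, $|P_i|=j'-j$ and $\ell:=j'-j\in[B',B]$, and then set $V_{d'}[j']=1$ exactly for $\mu(j')\le d'\le k$. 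Because, once $j$ and $j'$ are fixed, the choices of $d$ and of $P_i$ are independent, this threshold factors as $\mu(j')=\min_{\ell\in[B',B]}\bigl(d_{\min}(j)+\delta(j,j')\bigr)$, where $j=j'-\ell$, the value $d_{\min}(j)=\min\{d:U_d[j]=1\}$ (set to $\infty$ if no such $d$ exists) is obtained in $\Oh(mk)$ total time by one scan of $U_0,\dots,U_k$, and $\delta(j,j')=\min\{\dH(P_i,T[j\dd j')):|P_i|=\ell\}$ is the minimum number of mismatches between the fragment $T[j\dd j')$ and a pattern of matching length (set to $\infty$ if it exceeds $k$). Thus the entire task reduces to computing the $\Oh(mB)$ values $\delta(j,j')$.

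For this I would first invoke the pattern-reduction step built on \cref{lem:partition}: after $\Oh(m+N)$-time preprocessing I may keep only pairwise distinct representatives, which simultaneously guarantees $d=m^{\Oh(k)}$ (hence $\log d=\Oh(\log m)$ and a reduced total length of $m^{\Oh(k)}$) and $d\le N/B'$ (since every surviving pattern has length at least $B'$, and the representatives form a subset of the original patterns). Over these reduced patterns and the text $T$ I would then build, for each budget $\kappa=0,1,\dots,k$, the errata trie of \cref{lem:errata} with mismatch bound $\kappa$. Building all $k+1$ of them costs $\Oh(m+N+d\log^{k+1}d)=\Oh(N+\tfrac{N}{B'}\log^{k+1}m)$ time, using $d\le N/B'$, $\log d=\Oh(\log m)$ and $N\ge m$.

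Next I would iterate over all $j\in\{0,\dots,m-1\}$ and $\ell\in[B',B]$, setting $j'=j+\ell$ and querying the fragment $T[j\dd j')$. A query against the budget-$\kappa$ trie reports whether some pattern lies within $\kappa$ mismatches of $T[j\dd j')$; such a pattern is necessarily of length exactly $\ell$, because Hamming distance forces equal lengths. Querying with $\kappa=0,1,\dots,k$ and stopping at the first reported match therefore yields $\delta(j,j')$ exactly, or certifies $\delta(j,j')>k$. Each query runs in $\Oh(\log^k m\log\log m)$ time by \cref{lem:errata}, since $\log d=\Oh(\log m)$ and the reduced patterns have total length $m^{\Oh(k)}$, so the $\log\log$ factor is $\Oh(\log\log m)$; over all $\Oh(mB)$ fragments and the $\Oh(k)$ budgets this contributes $\Oh(mB\log^k m\log\log m)$. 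While iterating I relax $\mu(j')$ by $d_{\min}(j)+\delta(j,j')$, and at the end I set $V_{d'}[j']=1$ for $\mu(j')\le d'\le k$ in $\Oh(mk)$ total time. Summing the three contributions gives the stated $\Oh(mB\log^k m\log\log m+N+\tfrac{N}{B'}\log^{k+1}m)$ bound.

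The hard part will be extracting the \emph{exact} minimum distance $\delta(j,j')$ rather than the mere existence of a $k$-mismatch occurrence, as \cref{lem:errata} decides only the threshold-$k$ question; the layered family of tries above is the clean workaround, and one must verify that a fragment query really returns $\min_i\dH(P_i,T[j\dd j'))$ restricted to patterns of length $\ell$. The other point requiring care is the parameter balancing: it is precisely the reduction step that makes $\log d=\Oh(\log m)$ (controlling the per-query cost) hold together with $d\le N/B'$ (controlling the trie-construction cost), and it is the interaction of these two facts with $N\ge m$ that collapses the preprocessing into the $N+\tfrac{N}{B'}\log^{k+1}m$ term.
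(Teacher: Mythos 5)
Your proposal is correct and follows essentially the same route as the paper: reduce the number of patterns via \cref{lem:partition} so that $\log d=\Oh(\log m)$ and $d\le N/B'$, build the $k$-errata structure of \cref{lem:errata}, and query all $\Oh(mB)$ fragments $T[j\dd j+\ell)$ to propagate bits from $U_d[j]$ to $V_{d'}[j+\ell]$. Your reformulation via the thresholds $\mu(j')$, $d_{\min}(j)$ and $\delta(j,j')$, together with the explicit layering of $k+1$ tries with budgets $0,\dots,k$, is only a (slightly more careful) repackaging of the paper's direct loop over $d$ and $d'$, which implicitly assumes the errata structure can be queried with any threshold $d'\le k$.
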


\begin{proof}
We start with applying Lemma~\ref{lem:errata} on the patterns $P_{1},P_{2},\ldots$ and the text $T[0\dd (m-1)]$. Then, iterate over
every position $j=0,1,\ldots,m-1$, length $\ell=1,2,\ldots,\min\{m-j,B\}$, and $d=0,1,\ldots,k$ such that $U_{d}[j]=1$ .
Next, for every $d'=0,1,\ldots,k-d$ we check if $\dH(T[i\dd (i+\ell)),P_{i})\leq d'$ for some $i$. If so, we set $V_{d+d'}[j+\ell]=1$.

We analyse the overall time complexity. First, we need to construct the $k$-errata trie for
$P_{1},P_{2},\ldots,P_{d}$ and $T[0\dd (m-1)]$. This takes $\Oh(m+N+d\log^{k+1}d)$ time.
Then, we consider $\Oh(mB)$ possibilities for iterating over the position and the length, and for each of them spend $\Oh(\log^{k}d\log\log N)$
time.
As each $P_{i}$ is of length at least $B'$, from the preprocessing we have $\log d=\Oh(k\log m)$ and $N\leq dm$, the overall complexity is:
\[
\Oh(m+N+d\log^{k+1}d +mB\log^{k}d\log\log N) = \Oh(N+N/B'\log^{k+1}m + mB\log^{k}m\log\log m)
\]
as claimed.
\end{proof}

\subsection{Long Case}
\label{sec:long}

In the most technical case, we assume that the length of each pattern $P_{i}$ is at least $B$.
We start with providing an overview, and then move to filling in the technical details.

The very high-level idea is to explicitly or implicitly process all occurrences of every pattern $P_{i}$. If a given pattern $P_{i}$
occurs sufficiently few times in the text then we can afford to list and process each of its occurrences explicitly. Otherwise,
we invoke the structural characterization of~\cite{Charalampopoulos2020}, which, roughly speaking, says that if there are
many approximate occurrences of the same string sufficiently close to each other in the text, then the string and the relevant fragment of
the text have a certain regular structure. Thus, we can certainly hope to process all occurrences of such a pattern $P_{i}$ together
faster than by considering each of those occurrences one-by-one. However, this would not result in a speed-up, and in fact, we need
to consider multiple such patterns $P_{i}$ together. To this end, we need to further refine the characterization
of~\cite{Charalampopoulos2020}. Before we proceed with a description of our refinement, we start
with a summary of the necessary tools from~\cite{Charalampopoulos2020}. Then, we introduce some notation,
introduce some simplifying assumptions, and then describe our refinement.

\paragraph{Tools.}
The authors of~\cite{Charalampopoulos2020} phrase their algorithmic results using the framework of
\texttt{PILLAR} operations. In this framework, we operate on strings, each of them specified by a handle.
For two strings $S$ and $T$, the following operations are possible (among others):
\begin{enumerate}
\item \texttt{Extract}$(S,\ell,r)$: retrieve the string $S[\ell\dd r]$,
\item \texttt{LCP}$(S,T)$: compute the length of the longest common prefix of $S$ and $T$,
\item \texttt{IPM}$(S,T)$: assuming that $|T|\leq 2|S|$, return the starting positions of all exact occurrences of $S$ in $T$ (at most two starting positions or an arithmetical progression of starting positions).
\end{enumerate}

\begin{lemma}[{\cite[Theorem 7.2]{Charalampopoulos2020}}]
	\label{pillar simulation}
	After an $\Oh(N)$-time preprocessing of a collection of strings of total length $N$, each \texttt{PILLAR} operation can be performed in $\Oh(1)$ time.
\end{lemma}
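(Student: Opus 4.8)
The plan is to exhibit, for a collection of strings of total length $N$, a single data structure that is built in $\Oh(N)$ time and supports each of the three listed \texttt{PILLAR} operations in constant time. I would treat the operations in increasing order of difficulty, dispatching \texttt{Extract} and \texttt{LCP} with standard string-processing machinery and reserving the real work for \texttt{IPM}. Throughout, I concatenate the strings of the collection, separated by distinct sentinel characters, into a single string of length $\Oh(N)$ over a polynomial alphabet, so that all preprocessing costs are linear.

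For \texttt{Extract}, the argument is purely representational: every string manipulated by the framework --- whether an original input string or a substring produced by an earlier operation --- is stored as a handle consisting of a pointer to one of the concatenated strings together with a pair of endpoints. Under this convention, \texttt{Extract}$(S,\ell,r)$ simply returns the handle that shares the pointer of $S$ and whose endpoints are shifted by $\ell$ and $r$; no characters are copied, so the operation runs in $\Oh(1)$ time with no preprocessing beyond storing the handles. For \texttt{LCP}, I would build the (generalized) suffix tree of the concatenation in $\Oh(N)$ time via Lemma~\ref{lem:suffix} and augment it with a constant-time lowest-common-ancestor structure, so that the longest common prefix of any two suffixes equals the string depth of their LCA and is computed in $\Oh(1)$ time. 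Since any suffix of a handled substring is a suffix of an original string truncated at a known position, \texttt{LCP}$(S,T)$ reduces to one such query capped by the two substring lengths; this is exactly Lemma~\ref{lem:lce} applied to the concatenation. Building the suffix tree of the reverse concatenation as well handles the symmetric operations (longest common suffix) covered by the ``among others'' clause, and the dictionaries of Lemma~\ref{lem:dictionary} at the explicit nodes support the downward navigation used implicitly.

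The main obstacle, and the genuinely technical component, is \texttt{IPM}. The starting point is the structural fact that, when $|T|\le 2|S|$, the Fine--Wilf periodicity lemma forces the set of exact occurrences of $S$ in $T$ to be either at most two isolated positions or a single arithmetic progression whose common difference equals the smallest period of $S$. The task is therefore to recover this $\Oh(1)$-size description in constant time rather than to enumerate occurrences. I would build the internal-pattern-matching data structure of Kociumaka et al.\ on the concatenation: using suffix-tree anchors together with a dictionary of basic factors, one locates a constant number of ``synchronizing'' positions common to $S$ and $T$, tests the candidate alignments at each of them with \texttt{LCP} (and longest-common-suffix) queries, and, in the periodic regime, reads off the whole progression directly from the period and the two extreme occurrences. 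All of this preprocesses in $\Oh(N)$ time and answers in $\Oh(1)$ time.

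The hard part will be establishing this last ingredient: turning the per-occurrence viewpoint into a constant-time query returning the compact arithmetic-progression representation is precisely where the periodic structure must be exploited, and it is the fact on which the rest of the paper's algorithm (and in particular the use of this very lemma in the \emph{Long Case}) depends. If a fully self-contained argument were required, I would develop the anchoring and basic-factor machinery in detail; otherwise I would cite the internal-pattern-matching result as a black box, as is done here, and rely on Lemma~\ref{lem:suffix}, the LCA augmentation, and the handle representation above to assemble the $\Oh(N)$-time, $\Oh(1)$-query guarantee for the full \texttt{PILLAR} toolkit.
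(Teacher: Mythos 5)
This lemma is imported verbatim from Theorem~7.2 of Charalampopoulos, Kociumaka, and Wellnitz; the paper gives no proof of it and uses it strictly as a black box, so there is no internal argument to compare yours against. Your sketch does match how the cited result is actually established in the static setting: \texttt{Extract} is handle arithmetic, \texttt{LCP} is the suffix tree of the concatenation with constant-time LCA (i.e.\ \Cref{lem:lce}), and \texttt{IPM} is delegated to the internal-pattern-matching data structure of Kociumaka et al., whose correctness rests on the Fine--Wilf fact that occurrences of $S$ in a text of length at most $2|S|$ form a single arithmetic progression. Be aware, though, that your write-up does not constitute a proof either: the entire technical content of the lemma sits in the $\Oh(N)$-preprocessing, $\Oh(1)$-query \texttt{IPM} structure, which you explicitly leave as a citation (and whose deterministic linear-time construction is itself nontrivial, coming from later work rather than the original SODA~2015 paper). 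So your proposal is a faithful account of the intended implementation, but it black-boxes exactly the same hard ingredient that the paper does, and adds a self-contained argument only for the two easy operations.
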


\noindent We apply the above lemma on the text and all the patterns in $\Oh(m+N)$ time.

The first main result of~\cite{Charalampopoulos2020} is the following structural characterization.

\begin{lemma}[{\cite[Theorem 3.1]{Charalampopoulos2020}}]
	\label{two cases pillar theorem}
	For each pattern of length $|P| \le 1.5|T|$ at least one of the following holds:
	\begin{itemize}
		\item $|\Occ(P, T)| \le 864k$,
		\item There is a primitive string $Q$ of length $|Q| \le |P|/128k$ such that $\dH\left(P, Q^\infty\left[0 \dd |P|\right)\right) < 2k$.
	\end{itemize}
\end{lemma}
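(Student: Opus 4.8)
The plan is to prove the dichotomy of \cref{two cases pillar theorem} by a counting argument that extracts periodicity from the presence of many approximate occurrences. I would argue the contrapositive of the disjunction: assume that $P$ has more than $864k$ approximate occurrences in $T$ (with at most $k$ mismatches each) and show that this forces a short primitive approximate period $Q$ of $P$. The intuition is the classical one behind exact-matching periodicity lemmas (two overlapping occurrences of a string force a period equal to their distance), lifted to the Hamming setting: if two occurrences of $P$ start at positions $i$ and $j$ with $0 < j - i < |P|$, then $P$ agrees with a shift of itself by $d = j-i$, so $d$ behaves like an approximate period; the mismatches of the two occurrences against $T$ bound the total Hamming distance $\dH(P, P^{\text{shift}})$ by roughly $2k$.

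First I would formalize the overlap structure. Since $|P| \le 1.5|T|$, all occurrences lie in a window of length at most $1.5|P|$, so any two occurrence-starts differ by less than $|P|$ and hence genuinely overlap. With more than $864k$ occurrences packed into this window, I would either find two occurrences whose distance $d$ is small (at most $|P|/128k$, the target period length) by a pigeonhole/averaging argument on the gaps, or aggregate the pairwise-shift information. Each occurrence $O_s$ satisfies $\dH(P, T[s \dd s+|P|)) \le k$; comparing $O_i$ and $O_j$ position-by-position over their overlap and using the triangle inequality for Hamming distance on the overlapping region shows that $P$ and its $d$-shift differ in at most $2k$ positions within the overlap. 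Iterating this over a chain of closely spaced occurrences would let me conclude that some shift $d \le |P|/128k$ yields $\dH(P, (P\text{ shifted by }d)) < 2k$, which is exactly the statement that the length-$d$ prefix $Q := P[0 \dd d)$ satisfies $\dH(P, Q^\infty[0 \dd |P|)) < 2k$.

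The final bookkeeping step is to ensure $Q$ can be taken primitive and that the constant $128k$ in the bound on $|Q|$ is honestly met. If the candidate period $Q$ were not primitive, $Q = R^h$ for some $h \ge 2$, then $R$ is an even shorter approximate period with no worse mismatch count, so I would simply replace $Q$ by its primitive root, which only helps both the length bound $|Q| \le |P|/128k$ and primitivity simultaneously. The length bound itself is what the threshold $864k = \Theta(k \cdot 128k)$ is engineered to deliver: having $\Omega(k^2)$-scaled many occurrences in a window of length $O(|P|)$ forces a pair at distance $O(|P|/k^2)$, which after the chaining argument gives $|Q| \le |P|/128k$. I expect the main obstacle to be making the chaining of pairwise shifts rigorous: a single overlapping pair only bounds the shifted Hamming distance on the \emph{overlap}, not on all of $P$, so I would need to either pick two occurrences that overlap in nearly all of $P$ (guaranteed when $d$ is small relative to $|P|$) or propagate the periodicity across a sequence of occurrences while controlling the accumulation of mismatches so the total stays below $2k$ rather than blowing up. Carefully tracking that the mismatch budget telescopes rather than adds is the crux; I would rely on the fact that mismatches of $P$ against a single fixed period $Q^\infty$ are counted once per position, so overlapping evidence from many occurrences is not independently additive.
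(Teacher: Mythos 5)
First, a framing point: the paper does not prove this statement at all --- it is imported verbatim as \cite[Theorem 3.1]{Charalampopoulos2020} and used as a black box, so there is no ``paper's own proof'' to compare against. Judged on its own merits, your proposal has a genuine gap at exactly the place you flag as ``the crux,'' and the hand-wave offered there does not close it. Two $k$-mismatch occurrences of $P$ starting at positions $i$ and $j=i+d$ give you, via the triangle inequality on the overlap, that $\dH\bigl(P[0 \dd |P|-d),\, P[d \dd |P|)\bigr) \le 2k$. This is a bound on the \emph{self-overlap} Hamming distance at shift $d$. It does \emph{not} imply $\dH\bigl(P, Q^\infty[0 \dd |P|)\bigr) < 2k$ for $Q := P[0 \dd d)$: a single self-overlap mismatch at position $i$ can make $P$ disagree with $Q^\infty$ at every position of the arithmetic chain $i+d, i+2d, i+3d, \ldots$, so the count against $Q^\infty$ can blow up by a factor of $|P|/d$. (Concretely, $P = 0^{n/2}1^{n/2}$ has self-overlap distance $1$ at shift $d=1$ but is at Hamming distance $n/2$ from $0^n$.) Your closing remark that ``mismatches against a single fixed period are counted once per position'' is true but irrelevant to the direction you need: the problem is not double-counting evidence from many occurrences, it is that small self-overlap distance is a strictly weaker property than small distance to a periodic string, and no chaining of pairwise overlaps repairs this.

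The actual proof of Theorem 3.1 in Charalampopoulos et al.\ does not extract the period from the occurrence count by pigeonhole at all. It runs in the opposite direction: an analysis procedure decomposes $P$ into either $2k$ ``breaks'' (fragments admitting no period $\le |P|/128k$) or sufficiently many disjoint ``repetitive regions,'' and in each of those cases a marking/counting argument caps $|\Occ(P,T)|$ by $864k$ directly; only when the decomposition fails does one conclude that $P$ itself is within Hamming distance $2k$ of a string with a short primitive period. The constant $864k$ falls out of that marking analysis, not out of a gap-pigeonhole as $\Theta(k\cdot 128k)$ suggests. Your primitivity clean-up step (replacing $Q$ by its primitive root) is fine, and the observation that all occurrences overlap pairwise is correct, but the central implication from ``many occurrences'' to ``approximately periodic with the \emph{same} short period over all of $P$'' is the whole difficulty of the theorem, and as written your argument does not establish it.
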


\noindent Then, they convert the structural characterization (\cref{two cases pillar theorem}) into an efficient algorithm.

\begin{lemma}[{\cite[Main Theorem 8]{Charalampopoulos2020}}]
	\label{calculate occ}
	For any pattern of length $|P| \le 1.5|T|$ we can compute (a representation of) $\Occ(P, T)$ in time $\Oh(k^2 \log \log k)$ plus $\Oh(k^2)$ \texttt{PILLAR} operations.
\end{lemma}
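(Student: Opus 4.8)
The plan is to branch on the structural dichotomy of \cref{two cases pillar theorem}, which we can resolve with $\Oh(k)$ \texttt{PILLAR} operations. First I would try to certify the periodic case: take the natural candidate period $Q = P[0\dd r)$ for the relevant small $r$, and use \texttt{LCP} (kangaroo jumps) to read off successive mismatches of $P$ against $Q^\infty$, stopping as soon as $2k$ of them accumulate. If fewer than $2k$ mismatches occur and the resulting $Q$ is primitive with $r \le |P|/128k$, we are in the periodic case; otherwise \cref{two cases pillar theorem} guarantees $|\Occ(P,T)| \le 864k$ and we are in the few-occurrences case. The length condition keeps $|T|$ and $|P|$ within a constant factor, so there are only $\Oh(|P|)$ candidate starting positions, and in either case the output is produced as a small (polynomial in $k$) collection of arithmetic progressions, a lone position being a degenerate one.

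For the few-occurrences case I would use anchoring. Partition $P$ into $2k+1$ equal blocks; any alignment of $P$ against $T$ with at most $k$ mismatches leaves at least $k+1$ blocks matched exactly, so every occurrence is anchored by an exact occurrence of some block at the corresponding offset. For each block $B$ I would locate its exact occurrences in $T$ with \texttt{IPM}: since $|T| = \Oh(k\,|B|)$, I split $T$ into $\Oh(k)$ overlapping windows of length at most $2|B|$ and invoke \texttt{IPM} on each, obtaining at most two positions or a single arithmetic progression per window. Each such occurrence induces one candidate alignment of $P$ (shifting by the offset of $B$ inside $P$), which I verify with $\Oh(k)$ \texttt{LCP} queries. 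The point is that candidates are few: a long progression of exact occurrences of $B$ would force a short approximate period of $P$, contradicting the case hypothesis, so the number of candidate alignments over all $\Oh(k)$ blocks is $\Oh(k^2)$, giving $\Oh(k^2)$ \texttt{PILLAR} operations. Sorting and deduplicating the $\Oh(k^2)$ candidate positions with deterministic integer sorting contributes the additive $\Oh(k^2\log\log k)$ term, and this case dominates the overall cost.

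For the periodic case, write $P = Q^\infty[\alpha\dd\alpha+|P|)$ with $\Mis(P, Q^\infty[\alpha\dd\alpha+|P|))$ of size $<2k$. Because $Q$ is primitive and $r \le |P|/128k$, any shift of the $Q$-structure that is not a multiple of $r$ accumulates mismatches in every one of the $\ge 128k$ periods spanned by $P$, far exceeding $k$; hence only starting positions in a single residue class modulo $r$ can host a $k$-mismatch occurrence. Restricting to that class, I would compute the $\Oh(k)$ mismatches of the aligned region of $T$ against $Q^\infty$, again by kangaroo jumps and aborting the case once too many appear. Up to the fixed contribution of $P$'s periodic mismatches, the quantity $\dH(P, T[i\dd i+|P|))$ is then governed by how many periodic mismatches of $T$ fall inside the length-$|P|$ window; as $i$ advances by $r$ this sliding count changes only $\Oh(k)$ times, so $\{i : \dH(P,T[i\dd i+|P|)) \le k\}$ is a union of $\Oh(k)$ arithmetic progressions of common difference $r$, obtained by merging and scanning the two $\Oh(k)$-size mismatch lists. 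This case costs only $\Oh(k)$ \texttt{PILLAR} operations and $\Oh(k\log\log k)$ time, so it is cheaper than anchoring.

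The step I expect to be the main obstacle is the bookkeeping of the periodic case: correctly combining the three sources of mismatches — those of $P$ against $Q^\infty$, those of the relevant region of $T$ against $Q^\infty$, and the offset between the two periodic alignments — and proving that the feasible region collapses to a fixed residue class with only $\Oh(k)$ progressions rather than a number growing with $|T|$, together with carefully updating the sliding count at the window boundaries. A secondary delicate point is the detection itself: the candidate period must be chosen and certified within the $\Oh(k)$ \texttt{PILLAR} budget so that a false negative is impossible, which is precisely the guarantee supplied by \cref{two cases pillar theorem}.
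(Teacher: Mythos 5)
The paper does not prove \cref{calculate occ} at all: it is imported verbatim as Main Theorem~8 of Charalampopoulos, Kociumaka and Wellnitz~\cite{Charalampopoulos2020}, so there is no in-paper proof to compare against. Judged against the actual proof in that reference, your sketch has the right overall shape (branch on the dichotomy of \cref{two cases pillar theorem}; anchor-and-verify in the few-occurrences case; a sliding count over periodic mismatches in the periodic case), but it contains one genuine gap.

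The gap is in the few-occurrences case. You partition $P$ into $2k+1$ equal blocks and claim that the total number of candidate alignments over all blocks is $\Oh(k^2)$ because ``a long progression of exact occurrences of $B$ would force a short approximate period of $P$.'' That implication is false: a long progression of occurrences of a single block $B$ only forces $B$ (and the aligned region of $T$) to be periodic, not $P$. Take $P = B_1 B_2 \dd B_{2k+1}$ with $B_1 = \texttt{a}^{m/(2k+1)}$ and the remaining blocks aperiodic; then $P$ is nowhere near $2k$-periodic, yet $B_1$ can have $\Omega(m/k)$ exact occurrences in $T$, each inducing a candidate alignment. Verifying each with $\Oh(k)$ \texttt{LCP} queries then costs $\Omega(m)$ \texttt{PILLAR} operations, not $\Oh(k^2)$. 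This is exactly the difficulty that the proof in~\cite{Charalampopoulos2020} is built to avoid: instead of arbitrary equal blocks, their Analyze procedure either extracts $2k$ disjoint \emph{breaks} (fragments certified to be aperiodic, hence with $\Oh(1)$ occurrences per length-$2|B|$ window and $\Oh(k)$ candidates each), or disjoint \emph{repetitive regions} whose candidates are controlled via their periods, or concludes that $P$ itself is almost periodic. Without that aperiodicity certificate on the anchors, the $\Oh(k^2)$ bound on candidates does not follow.

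Two smaller points on the periodic case. First, because a periodic mismatch of $P$ can align with a periodic mismatch of $T$ and cancel, $\dH(P,T[i\dd i+|P|))$ is not exactly the sliding count plus a fixed offset; there are up to $\Oh(k^2)$ exceptional shifts (the set the present paper calls $E$ in \cref{key theorem}) that must be tested individually, so the output is $\Oh(k^2)$ progressions rather than the $\Oh(k)$ you claim --- still within the lemma's budget, but the bookkeeping you flag as delicate is indeed where this surfaces. Second, taking $Q = P[0\dd r)$ as the candidate period presupposes that the approximate period is visible as a prefix-period of $P$, which also needs the machinery of the Analyze procedure to justify within $\Oh(k)$ \texttt{PILLAR} operations.
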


\noindent The representation is a set of $\Oh(k^{2})$ arithmetical progressions. Further, as the algorithm
follows the proof of Lemma~\ref{two cases pillar theorem}, in fact it either outputs a set of $\Oh(k)$ occurrences
or finds a primitive string $Q$ of length $|Q| \le |P|/128k$ such that $\dH\left(P, Q^\infty\left[0 \dd |P|\right)\right) < 2k$.

\paragraph{Notation.}
We rephrase the \APEmis problem as follows. For a pattern $P$ and a set of positions $A$ in the text $T$ we define:
\[ \Ext\left(P, T, A\right) := \left(\Occ \left(P, T\right) \cap A \right) + |P|,\]
which is also a set of positions in $T$. Then, \APEmis reduces to $\Oh(k^{2})$ instances of computing:
\[
	\bigcup_i \Ext[k'](P_i, T, A)
\] 
for a given set of patterns $P_{1},P_{2},\ldots$, a text $T$, a set of positions $A$, and parameter $k'\leq k$.
In every instance, the set $A$ is simply the characteristic vector of some bitvector $U_{d}$, and the obtained
set of positions is treated as another bitvector that contributes to some bitvector $V_{d'}$.
From now on, we focus on designing an algorithm that computes $\bigcup_i \Ext(P_i, T, A)$, and
identify the underlying sets with their characteristic vectors. For two sets of integers $X$ and $Y$, we define their sumset
$X \oplus Y := \{ x+y : x \in X, y \in Y\}$.  For a set of integers $x$ and a shift $s$, we define $X+s := \{ x + s : x\in X\}$. The following result is well-known.

\begin{lemma}[e.g.~\cite{Furer14a}]
\label{lem:fft}
Given $X,Y\subseteq [0,\ell)$, we can calculate $X \oplus Y$ in $\Oh(\ell \log \ell)$ time.
\end{lemma}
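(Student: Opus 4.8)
The plan is to reduce the sumset computation to a single polynomial multiplication, which can then be carried out with the fast Fourier transform. First I would encode each set as a characteristic polynomial: define $f_X(z)=\sum_{x\in X} z^x$ and $f_Y(z)=\sum_{y\in Y} z^y$, both of degree less than $\ell$, whose coefficient vectors are exactly the characteristic vectors of $X$ and $Y$. These can be built in $\Oh(\ell)$ time by scanning the elements of $X$ and $Y$.

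The key observation is that the coefficient of $z^{s}$ in the product $f_X(z)\cdot f_Y(z)=\sum_{x\in X,\,y\in Y} z^{x+y}$ counts exactly the number of representations $s=x+y$ with $x\in X$ and $y\in Y$. Hence this coefficient is nonzero precisely when $s\in X\oplus Y$. Therefore, once the product polynomial is computed, I would recover $X\oplus Y$ simply by scanning the $\Oh(\ell)$ coefficients of the product (whose degree is less than $2\ell-1$) and collecting the positions with a nonzero value, which takes $\Oh(\ell)$ time.

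It remains to multiply the two polynomials in $\Oh(\ell\log\ell)$ time. This is the standard fast Fourier transform: evaluate both polynomials at $\Oh(\ell)$ suitable roots of unity, multiply the evaluations pointwise, and interpolate back, each stage costing $\Oh(\ell\log\ell)$. The only point that requires a moment of care is exactness of the coefficients, since over the complex numbers one must control numerical precision; this is the part I expect to be the mildest obstacle. It is handled in the usual way by noting that every coefficient of the product is a nonnegative integer bounded by $\min(|X|,|Y|)\le\ell$, so it fits in a single machine word in our Word RAM model. One can therefore either run FFT over $\mathbb{C}$ with $\Oh(\log\ell)$ bits of precision and round to the nearest integer, or, to avoid any precision argument, use a number-theoretic transform modulo a prime exceeding $\ell$; either choice yields exact coefficients within the claimed $\Oh(\ell\log\ell)$ bound and establishes the lemma.
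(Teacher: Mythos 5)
Your proof is correct and is exactly the standard argument that the paper relies on by citing it (the paper gives no proof of its own, only a reference): encode the sets as characteristic polynomials of degree less than $\ell$, multiply them via FFT in $\Oh(\ell\log\ell)$ time, and read off the nonzero coefficients of the product. Your remark on precision (integer coefficients bounded by $\ell$, so either rounding a complex FFT or using a number-theoretic transform suffices) appropriately closes the only delicate point.
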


\paragraph{Simplifying assumptions.}
It is convenient to assume that each pattern has roughly the same length, similar to the length of the text.
More formally, our algorithm will assume that we have:
\begin{enumerate}
\item $|P_i| \in [\ell, 1.1 \ell)$, for every $i$,
\item $|T| \in [\ell,1.5 \ell]$,
\end{enumerate}
for some $\ell$. Any instance can be reduced $\Oh(\log m)$ instances in which $|P_i| \in [\ell, 1.1 \ell)$, for every $i$,
by considering $\ell = 1.1^0, 1.1^1, 1.1^2, \dots \le m$.
For each such $\ell \geq B$, we create a separate instance containing only patterns of length from $[\ell, 1.1 \ell)$.
As each pattern $P_i$ falls within exactly one such instance, designing an algorithm running in $\Oh(f(m)+N)$ time for every such
instance, implies an algorithm running in $\Oh(f(m)\log m+N)$ time for a general instance.
To additionally guarantee that $|T| \le 1.5 \ell$ (so that \cref{two cases pillar theorem} can be directly applied), we choose $|T| / 0.4 \ell$
fragments $T_j$ such that each potential occurrence of a pattern $P_i$ in $T$ falls within some fragments $T_j$.
Formally, if $T_j$ is the $1.5\ell$-length fragment (possibly shorter for the very last fragment) starting at position $0.4 j \ell$ then:
\[
	\Occ(P_i, T) = \bigcup_j \Occ(P_i, T_j) + 0.4j\ell,
\]
where we disregard fragments shorter than $\ell$ as they cannot contain an occurrence of any $P_{i}$.
From now on, always assume that we deal with a single text $T$, with $|T| \in [\ell, 1.5 \ell]$, 
and a set of patterns $P_i$ with lengths in $[\ell, 1.1 \ell)$ (we will sometimes omit the index of a pattern and simply write $P$).
The preprocessing from Lemma~\ref{pillar simulation} is performed only once, and then in each instance
we assume that any \texttt{PILLAR} operation can be performed in $\Oh(1)$ time.
The input bitvectors $U_{0},U_{1},\ldots,U_{k}$ in such an instance are fragments of the original input bitvectors,
and after computing the output bitvectors $V_{0},V_{1},\ldots,V_{k}$ we update appropriate fragments of the
original output bitvectors by computing bitwise-OR.
The final number of restricted instances is $\Oh(m/\ell\cdot \log m)$, and each original pattern appears in $\Oh(m/\ell)$ instances.

Consider a restricted instance containing $d$ patterns. Our goal will be to solve it in $\Oh((d + \ell \sqrt{d\log \ell})\poly(k))$ time.
Before we proceed to describing such an algorithm, we analyse what does this imply for an algorithm solving the original instance.

\begin{theorem}
\label{thm:long}
For any $k$, an instance of \APEmis where the length of each pattern is at least $B$ can be solved in
$\Oh(m+(Nm/B^{2}+m^{2}\log^{2}m/B)\poly(k))$ time.
\end{theorem}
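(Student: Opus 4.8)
The plan is to assume the promised subroutine for restricted instances and to reduce \cref{thm:long} to counting the total work across all restricted instances. Recall from the simplifying assumptions that each original pattern of length at least $B$ is partitioned among $\Oh(m/\ell \cdot \log m)$ restricted instances (one family of instances per scale $\ell = 1.1^0, 1.1^1, \dots$, each scale split into $\Oh(m/\ell)$ length-$1.5\ell$ windows $T_j$ of the text). Within a single scale $\ell$, every original pattern appears in each of the $\Oh(m/\ell)$ windows, so if the scale-$\ell$ family contains $d_\ell$ distinct patterns (each of length $\Theta(\ell)$, hence contributing $\Theta(\ell)$ to $N$), the total input length at that scale is $\Theta(d_\ell \ell)$. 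I would first record that $\sum_\ell d_\ell \ell = \Oh(N)$, since each original pattern is counted at exactly one scale and has length $\Theta(\ell)$ there.

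Next I would sum the per-instance cost. By hypothesis a restricted instance with $d$ patterns costs $\Oh((d + \ell\sqrt{d\log\ell})\poly(k))$ time. Summing over the $\Oh(m/\ell)$ windows at scale $\ell$ (each window sees all $d_\ell$ patterns of that scale) gives
\eq{
\Oh\!\left(\frac{m}{\ell}\left(d_\ell + \ell\sqrt{d_\ell \log \ell}\right)\poly(k)\right)
= \Oh\!\left(\left(\frac{m d_\ell}{\ell} + m\sqrt{d_\ell \log \ell}\right)\poly(k)\right).
}
Now I would sum over all scales $\ell \ge B$. For the first term, $\sum_\ell \frac{m d_\ell}{\ell} \le \frac{m}{B}\sum_\ell d_\ell$; but a cleaner bound comes from $d_\ell \le d_\ell \ell / B$ combined with $\sum_\ell d_\ell \ell = \Oh(N)$, which is awkward, so instead I would use $\frac{d_\ell}{\ell} \le \frac{d_\ell \ell}{B^2}$ (valid since $\ell \ge B$) to get $\sum_\ell \frac{m d_\ell}{\ell} = \Oh(m N / B^2)$ after applying $\sum_\ell d_\ell \ell = \Oh(N)$. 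This matches the first term $Nm/B^2$ in the claimed bound.

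For the second term I would bound $\sum_\ell m\sqrt{d_\ell \log \ell}$. Since there are only $\Oh(\log m)$ distinct scales and $\log \ell = \Oh(\log m)$, I would apply Cauchy--Schwarz: $\sum_\ell \sqrt{d_\ell} \le \sqrt{(\#\text{scales}) \sum_\ell d_\ell} = \Oh(\sqrt{\log m \cdot \sum_\ell d_\ell})$. Here $\sum_\ell d_\ell \le \sum_\ell d_\ell \ell / B = \Oh(N/B)$ since $\ell \ge B$, but to recover the stated $m^2\log^2 m / B$ term I would instead bound each $d_\ell$ crudely by the number of distinct surviving patterns, which the short-case preprocessing (via \cref{lem:partition}) caps at $d_\ell = m^{\Oh(k)}$; the more relevant bound is that in each window the naive branch already forces $d \le \Oh(m)$ after deduplication, giving $\sqrt{d_\ell} = \Oh(\sqrt m)$ and hence $\sum_\ell m \sqrt{d_\ell \log m} = \Oh(m \cdot \sqrt m \cdot \log m \cdot \sqrt{\log m}) $, which after absorbing the window count $m/\ell \le m/B$ yields the $m^2 \log^2 m / B$ term. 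The main obstacle, and the step requiring the most care, is exactly this bookkeeping: correctly tracking how the factor $m/\ell$ from the number of windows interacts with the $\sqrt{d \log \ell}$ term and the constraint $\ell \ge B$, so that the two summands collapse precisely to $Nm/B^2 \cdot \poly(k)$ and $m^2 \log^2 m / B \cdot \poly(k)$. I would finish by adding the one-time $\Oh(m+N)$ \texttt{PILLAR} preprocessing of \cref{pillar simulation}, folding $N$ into the leading $\Oh(m)$ and the first term, to obtain the stated $\Oh(m + (Nm/B^2 + m^2\log^2 m / B)\poly(k))$ bound.
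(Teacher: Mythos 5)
Your overall decomposition and your treatment of the first term coincide with the paper's: per scale $\ell_i = 1.1^i \ge B$ the cost is $\Oh\bigl(\tfrac{m}{\ell_i}(d_i + \ell_i\sqrt{d_i\log\ell_i})\bigr)\poly(k)$, and $\sum_i \tfrac{m d_i}{\ell_i} = \sum_i \tfrac{m}{\ell_i^2}\, d_i\ell_i \le \tfrac{m}{B^2}\sum_i d_i\ell_i = \Oh(Nm/B^2)$. The genuine gap is in the second term $\sum_i m\sqrt{d_i\log \ell_i}$. The paper's key step, which you are missing, is a per-scale threshold comparison: for each $i$, either $m\sqrt{d_i\log m} \le d_i\ell_i$, in which case the term is charged to $\sum_i d_i\ell_i = \Oh(N)$, or else $\sqrt{d_i} < m\sqrt{\log m}/\ell_i$, in which case the term is at most $m^2\log m/\ell_i$ and summing $1/\ell_i$ over the $\Oh(\log m)$ scales with $\ell_i \ge B$ gives $\Oh(m^2\log^2 m/B)$.

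Your substitute for this step does not work. You assert $d_\ell = \Oh(m)$ ``after deduplication,'' but no such bound is available in the long case: the patterns there are only known to have length at least $B$ and total length $N$, so all you get is $d_\ell \le N/B$; the deduplication of \cref{lem:partition} belongs to the short case and in any event only caps the count at $m^{\Oh(k)}$, not $\Oh(m)$. Moreover, even granting $d_\ell = \Oh(m)$, the arithmetic fails: the factor $m/\ell$ for the number of windows was already folded in when the per-scale second term became $m\sqrt{d_\ell\log\ell}$, so ``absorbing the window count'' a second time double-counts it, and $m^{1.5}\log^{1.5}m \cdot (m/B) = m^{2.5}\log^{1.5}m/B$ is not the claimed $m^2\log^2 m/B$. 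The Cauchy--Schwarz route you started and abandoned can be pushed through ($\sum_i m\sqrt{d_i\log m} \le m\log m\sqrt{N/B} \le \tfrac12(Nm/B^2 + mB\log^2 m)$ by AM--GM), but it lands inside the claimed bound only under the extra assumption $B \le \sqrt{m}$; the paper's case split avoids this assumption and is the missing ingredient.
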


\begin{proof}
We assume that a restricted instance with $d$ patterns can be solved in 
$\Oh((d + \ell \sqrt{d\log \ell})\poly(k))$ time, and describe an algorithm for solving a general instance
of \APEmis.

Let $d_{i}$ denote the number of patterns of length from $[\ell_{i},\ell_{i+1})$, where $\ell_{i}=1.1^{i}$, in the original instance.
Recall that we only consider $i$ such that $\ell_{i} \geq B$.
After the initial $\Oh(m+N)$-time preprocessing, ignoring factors polynomial in $k$, the total time to solve the restricted instances is:
\begin{align*}
\Oh(\sum_{i} m/\ell_{i}(d_{i}+\ell_{i}\sqrt{d_{i}\log\ell_{i}}) &= \Oh(\sum_{i}m/\ell_{i}^{2}\cdot d_{i}\ell_{i}+\sum_{i}m\sqrt{d_{i}\log m}) \\
& = \Oh(Nm/B^{2} + \sum_{i} m\sqrt{d_{i}\log m}),
\end{align*}
where we have used $\sum_{i} d_{i}\ell_{i} = N$ and $\ell_{i} \geq B$.
We split the sum by separately considering $i$ such that $m\sqrt{d_{i}\log m} \leq d_{i}\ell_{i}$, i.e. $\sqrt{d_{i}} \geq m\sqrt{\log m}/\ell_{i}$.
This gives us:
\begin{align*}
& \Oh(Nm/B^{2} + \sum_{i} d_{i}\ell_{i} + \sum_{i}m^{2}\log m/\ell_{i}) = \Oh(Nm/B^{2}+N+m^{2}\log^{2}m/B).
\end{align*}
Thus, as long as we indeed manage to solve a restricted instance in the promised complexity, we
obtain the theorem.
\end{proof}

In the remaining part of this section, we describe an algorithm for solving a restricted instance of \APEmis
containing $d$ patterns in $\Oh((d + \ell \sqrt{d\log \ell})\poly(k))$ time.

\paragraph{Additional assumptions.}
We start with applying Lemma~\ref{calculate occ} on every pattern $P_{i}$ to obtain a representation of its occurrences in the text
in $\Oh(d\poly(k))$ time.
As mentioned earlier, the algorithm either outputs a set of $\Oh(k)$ occurrences
or finds a primitive $2k$-period $Q_{i}$ of $P_{i}$ such that $|Q_{i}| \le |P|/128k < \ell / 100k$ (the second inequality holds
because $|P| < 1.1\ell$).
In the latter case, we also obtain a representation of the whole set of occurrences as $\Oh(k^{2})$
arithmetical progressions.

If there are $\Oh(k)$ occurrences of $P_{i}$ in the text then we process each of them naively in $\Oh(dk)$ time.
From now on we can thus assume otherwise for every pattern $P_{i}$.
Then, we consider the text, and ensure that it is fully covered by approximate occurrences of the patterns:
\begin{itemize}
	\item some pattern $P$ is a $k$-mismatch prefix of $T$, formally $\dH(P, T[0 \dd |P|)) \le k$; and
	\item some pattern $P'$ is a $k$-mismatch suffix of $T$, formally $\dH(P', T[|T| - |P'| \dd |T|)) \le k$.
\end{itemize}
This is guaranteed by removing some prefix and some suffix of the text; it can be implemented in $\Oh(d\poly(k))$ time
by extracting the first and the last occurrence from each arithmetical progression in the representation.
Then the following holds.

\begin{restatable}[]{lemma}{restateCommonPeriodTheorem}
	\label{we have common period theorem}
	All $Q_{i}$s are cyclically equivalent, and every $Q_{i}$ is a $6k$-period of the text.
\end{restatable}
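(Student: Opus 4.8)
The plan is to use the two surviving patterns that cover the endpoints of the text as ``anchors'', and then propagate a common period across the whole text and on to every other pattern. Write $P_a$ for the pattern with $\dH(P_a, T[0\dd|P_a|))\le k$ and $P_b$ for the pattern with $\dH(P_b, T[|T|-|P_b|\dd|T|))\le k$; both survive the pruning, so the characterization behind Lemma~\ref{calculate occ} supplies primitive $Q_a,Q_b$ with $\dH(P_a,Q_a^\infty[0\dd|P_a|))<2k$ and $\dH(P_b,Q_b^\infty[0\dd|P_b|))<2k$ and with $|Q_a|,|Q_b|<\ell/100k$. By the triangle inequality $Q_a$ is a $3k$-period (offset $0$) of the prefix $T[0\dd|P_a|)$ and $Q_b$ is a $3k$-period (offset $0$) of the suffix $T[|T|-|P_b|\dd|T|)$. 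Since $|P_a|,|P_b|\ge\ell$ and $|T|\le 1.5\ell$, these two windows overlap in a fragment $O$ of length $|P_a|+|P_b|-|T|\ge 0.5\ell$.

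The heart of the argument, and the step I expect to be the main obstacle, is an approximate Fine--Wilf statement that I would isolate as a separate lemma: if a string $S$ of length $L$ has two primitive periods $Q,Q'$, each with at most $e$ mismatches, and $|Q|,|Q'|\le L/(10e)$, then $Q$ and $Q'$ are cyclically equivalent, and their period extensions agree on an entire length-$|Q|$ window of $S$. To prove it, align $Q^\infty$ and $Q'^\infty$ to $S$; they disagree on at most $2e$ positions, so one of the $\le 2e+1$ maximal runs on which the two extensions agree has length at least $(L-2e)/(2e+1)>|Q|+|Q'|$. On that run the common string has exact period $|Q|$ and exact period $|Q'|$, so by the exact Fine--Wilf theorem it has period $\gcd(|Q|,|Q'|)$; primitivity of $Q$ and of $Q'$ forces $\gcd(|Q|,|Q'|)=|Q|=|Q'|$, and agreement on a full period then shows the two extensions coincide on the run, so $Q$ and $Q'$ are rotations of one another.

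Applying this to $O$ with $e=3k$ (the hypothesis holds since $|Q_a|,|Q_b|<\ell/100k\le |O|/(10\cdot 3k)$) shows that $Q_a$ and $Q_b$ are cyclically equivalent and that the $|Q_a|$-periodic extension $A$ anchored at position $0$ of $T$ and the $|Q_b|$-periodic extension $B$ anchored at position $|T|-|P_b|$ agree on a full period inside $O$. As both are $|Q_a|$-periodic, they then agree at every position reachable from $O$ by multiples of $|Q_a|$, in particular on all of $T[|P_a|\dd|T|)$, which is nonempty and contained in the suffix window because $|P_a|+|P_b|\ge 2\ell>|T|$. Extending $A$ to the $|Q_a|$-periodic string anchored at $0$ over all of $[0\dd|T|)$, it differs from $T$ in at most $3k$ positions on $T[0\dd|P_a|)$ and, coinciding there with $B$, in at most $3k$ further positions on $T[|P_a|\dd|T|)$; hence $Q_a$ is a $6k$-period of $T$.

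Finally I would transfer this to an arbitrary surviving pattern $P_i$. Since $P_i$ has an approximate occurrence, there is a position $s$ with $\dH(P_i,T[s\dd s+|P_i|))\le k$, so $Q_i$ is a $3k$-period of the length-$\ge\ell$ fragment $T[s\dd s+|P_i|)$, while restricting the $6k$-period of $T$ makes $Q_a$ a $6k$-period of that same fragment. Applying the approximate Fine--Wilf statement once more, now with $e=6k$ and using $|Q_i|,|Q_a|<\ell/100k$, shows that $Q_i$ and $Q_a$ are cyclically equivalent; since $Q_a$ is itself one of the $Q_i$, all the $Q_i$ are cyclically equivalent. Cyclic equivalence means $Q_i^\infty$ is a shift of $Q_a^\infty$, so replacing the offset of the $6k$-period of $T$ witnessed by $Q_a$ with the corresponding shift shows that every $Q_i$ is likewise a $6k$-period of $T$, which completes the proof.
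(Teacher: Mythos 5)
Your proof is correct, and while it relies on the same basic toolkit as the paper (triangle inequality, a long mismatch-free run, exact Fine--Wilf, primitivity), the decomposition is genuinely different. The paper compares all the $Q_i$ simultaneously on the middle section $T_{mid}=T[\ell/2\dd\ell)$, which every occurrence must cover, and then splits into two cases: for $|Q_1|\neq|Q_2|$ it locates a fragment of $T_{mid}$ free of the periodic mismatches of both periods and applies Fine--Wilf there, while for $|Q_1|=|Q_2|$ it counts mismatches between the two periodic extensions ($\geq 50k$ versus $\leq 6k$ by the triangle inequality). You instead isolate a single ``approximate Fine--Wilf'' lemma that handles both cases uniformly, by finding a long run on which the two periodic extensions agree \emph{with each other} rather than with the underlying string; this is a little cleaner, reusable, and your numeric hypotheses check out (in particular $L\geq 10e$ follows from $|Q|\geq 1$, which is what makes $(L-2e)/(2e+1)>|Q|+|Q'|$ go through). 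Your second difference is the order of the two claims: you anchor on the boundary patterns $P_a$ and $P_b$, establish that $Q_a$ is a $6k$-period of all of $T$ first by stitching $3k$ mismatches on the prefix window with $3k$ on the remainder, and only then transfer cyclic equivalence to each remaining $Q_i$ via the length-$\geq\ell$ fragment its occurrence covers; the paper derives cyclic equivalence of all periods first (on $T_{mid}$) and leaves the $6k$-period claim to ``similar arguments.'' Your route thus supplies explicitly a detail the paper only sketches, at the cost of a two-step transfer where the paper needs one. The only loose end is the degenerate case $|P_a|=|T|$, where $T[|P_a|\dd|T|)$ is empty, but there the prefix window is already all of $T$ and the $6k$-period (indeed $3k$-period) claim is immediate, so nothing breaks.
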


\begin{proof}
	We first prove that the periods $Q_i$ obtained for all the patterns $P_i$ must be cyclically equivalent.
	Let $T_{mid} := T[\ell / 2 \dd \ell)$ be the middle section of $T$.
	Since all patterns are of length $|P| \ge \ell$ and the text is of length $|T| \le 1.5\ell$, all pattern occurrences must cover the middle section of $T$.
	Recall that we assume that every pattern $P_i$ has some $2k$-period $Q_i$.
	By triangle inequality, every $Q_i$ must be a $3k$-period of $T_{mid}$.
	We will first show that if the strings $Q_i$ are primitive and of length $|Q_i| < \ell / 100k$, then they are all cyclically equivalent.
	Select any two such periods of $T_{mid}$, denoted $Q_1$ and $Q_2$, and assume (only to avoid clutter) that both of
	their offsets are equal to $0$.
	
	First, assume that $Q_1$ and $Q_2$ are not of the same length.
	Observe that since the size of the combined set of mismatches
	$
	\Mis(T_{mid}, Q_1^\infty[0 \dd \ell / 2)) \cup 
	\Mis(T_{mid}, Q_2^\infty[0 \dd \ell / 2))
	$
	is at most $6k$, there must exist a substring $T_{sub}$ of $T_{mid}$ that does not contain any such mismatch of length at least
	\[
		|T_{sub}| \ge \left \lceil \frac{|T_{mid}| - 6k}{6k + 1} \right \rceil \ge \frac{\frac{1}{2} \ell + 1}{6k + 1} - 1 > \frac{\ell}{14k} - 1.
	\]
	The strings $Q_1$ and $Q_2$ are thus exact periods of $T_{sub}$. In addition we have
	\[
		|Q_1| + |Q_2| \le \ell / 100k + \ell / 100k < \ell / 14k < |T_{sub}| + 1
	\]
	which by the periodicity lemma of Fine and Wilf~\cite{fineWilf65} 
	induces a period of length $\gcd(|Q_1|, |Q_2|)$, and contradicts the assumption that $Q_1$ and $Q_2$ are primitive.
	
	In the other case, when $|Q_1| = |Q_2|$, assume that $Q_1 \neq Q_2$.
	We would then have
	\[
		\dH\left(Q_1^\infty\left[0 \dd \ell / 2\right), Q_2^\infty\left[0 \dd \ell / 2\right)\right)
		\ge \left\lfloor \frac{\ell / 2}{|Q_1|} \right \rfloor
		\ge \left\lfloor \frac{\ell / 2}{\ell / 100k} \right \rfloor
		= 50k.
	\]
	On the other hand by triangle inequality
	\[
		\dH\left(Q_1^\infty\left[0 \dd \ell / 2\right), Q_2^\infty\left[0 \dd \ell / 2\right)\right)
		\le \dH\left(Q_1^\infty\left[0 \dd \ell / 2\right), T_{mid}\right) + \dH\left(T_{mid}, Q_2^\infty\left[0 \dd \ell / 2\right)\right)
		\le 3k + 3k,
	\]
	which again gives us a contradiction and proves that $Q_1$ must be equivalent to $Q_2$.

	Now since we assume that some $P$ is a $k$-mismatch prefix of $T$ and some $P'$ is a $k$-mismatch suffix of $T$, both having $2k$-period $Q$, it can be proven with similar arguments that $Q$ is a $6k$-period of $T$.
\end{proof}

We choose $Q$ to be a cyclic shift of $Q_{1}$, so $Q$ is a $2k$-period of every pattern, and a $6k$-period with offset 0 of the text. This can be implemented in $\Oh(d\poly(k)+\ell)$ time as follows.
Because $|Q_{i}| < \ell / 100k$ and $\dH\left(P_{i}, Q_{i}^\infty\left[0 \dd |P_{i}|\right)\right) \leq 2k$,
in fact for some $j$ we have $P_{i}[j\cdot |Q_{i}| \dd (j+2)\cdot |Q_{i}|) = Q_{i}Q_{i}$. Further,
such a $j$ can be computed in $\Oh(k)$ time by just trying $j=0,1,2,\dd$, and verifying each
$j$ by computing the longest common prefix twice. Overall, this takes $\Oh(d\poly(k))$ time.
We start with setting $Q' := Q_{1}$. Then, we search for a cyclic shift $Q$ of $Q'$ such that
$\dH\left(T, Q^\infty\left[0 \dd |T|\right) \right) \leq 6k$. To this end, we check all possible $\Oh(\ell/k)$
cyclic shifts. To verify whether $Q$ is a good cyclic shift, we
extract the mismatches between $T$ and $Q^\infty\left[0 \dd |T|\right)$, terminating when there are
more than $6k$. The next mismatch can be found in constant time by first
computing the longest common prefix of the remaining suffix of $T$ with an appropriate cyclic shift
of $Q$, and if there is none by computing the longest common prefix of the remaining suffix of $T$
with the suffix shortened by $|Q|$ characters. Overall, this takes $\Oh(\ell)$ time.
After having found $Q$, we also compute, for every
pattern $P_{i}$, an integer $r$ such that $\dH\left(P_{i}, Q^\infty\left[r \dd r+|P_{i}|\right)\right) \leq 2k$,
which can be done with a single internal pattern matching query to find
an occurrence of $Q$ in $Q_{i}Q_{i}$.

\paragraph{The algorithm.}
To obtain an efficient algorithm, we will partition the set of all positions $[0 \dd |T|)$ into $\Oh(k)$ consecutive regions
$R_{0},R_{1},\ldots, R_{b}$ with the property that if we restrict the text to any region $R_i$, the corresponding fragment
is \emph{almost} periodic with respect to $Q$; more specifically, it may have a single periodic mismatch at the rightmost position.
Then, for each pair of regions $R_s$ and $R_t$, with $s\leq t$, we separately calculate the set of extensions induced by pattern occurrences
that start inside $R_s$ and end inside $R_t$:
\[
	\bigcup_i \Ext(P_i, T, A) = \bigcup_{s} \bigcup_{t} \bigcup_i \Ext(P_i, T, A \cap R_s) \cap R_t.
\]
Since $b=\Oh(k)$, this allows us to reduce the problem to $\Oh(k^2)$ separate instances of calculating:
\[
	\bigcup_i \Ext(P_i, T, A \cap R_s) \cap R_t.
\]

Consider a single pattern $P$, and recall that by the additional assumptions, we have a primitive string $Q$ of length $|Q| < \ell / 100k$
such that:
\begin{align*}
	\dH(P, \QP) \le 2k \qquad\text{ for } \QP &:= Q^\infty[r \dd r + |P|), \\
	\dH(T, \QT) \le 6k \qquad\text{ for } \QT &:= Q^\infty[0 \dd |T|).
\end{align*}
Further, let $C_{r}$ be the positions congruent to $r$ modulo $|Q|$ in the text.
We start with recalling from~\cite{Charalampopoulos2020} that the positions of all $k$-mismatch occurrences are congruent
modulo $|Q|$. We provide a proof for completeness.

\begin{restatable}[]{lemma}{restateOccSubseteqProgression}
	\label{occ subseteq progression}
	$\Occ(P, T) \subseteq \set{r + i|Q| \ : \ i \in \mathbb{Z}}$.
\end{restatable}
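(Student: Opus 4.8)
The plan is to argue by contradiction. Suppose some $k$-mismatch occurrence of $P$ starts at a position $j$ with $j \not\equiv r \pmod{|Q|}$, i.e. $\dH(P, T[j \dd j + |P|)) \le k$ but $j \notin \set{r + i|Q| : i \in \mathbb{Z}}$. The idea is to look at the two length-$|P|$ windows of $Q^\infty$ that are aligned with $P$ and with the occurrence, respectively: through the text they are forced to be very close in Hamming distance, while primitivity of $Q$ forces them to be very far apart whenever they are out of phase. These two bounds will be incompatible.

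First I would establish the upper bound via the triangle inequality, chaining the three approximate-periodicity facts we already have. The occurrence gives $\dH(P, T[j \dd j + |P|)) \le k$. Restricting the global bound $\dH(T, \QT) \le 6k$ to the subwindow $[j \dd j + |P|)$ can only decrease the number of mismatches, so $\dH(T[j \dd j + |P|), Q^\infty[j \dd j + |P|)) \le 6k$, since $\QT[j \dd j + |P|) = Q^\infty[j \dd j + |P|)$. The pattern side gives $\dH(P, Q^\infty[r \dd r + |P|)) \le 2k$ directly from $\dH(P, \QP) \le 2k$. Combining the three yields
\[
	\dH\bigl(Q^\infty[j \dd j + |P|),\, Q^\infty[r \dd r + |P|)\bigr) \le 6k + k + 2k = 9k.
\]

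The core step is the matching lower bound. Both strings are length-$|P|$ windows of $Q^\infty$: the first equals $Q_1^\infty[0 \dd |P|)$ and the second $Q_2^\infty[0 \dd |P|)$, where $Q_1 := Q^\infty[j \dd j + |Q|)$ and $Q_2 := Q^\infty[r \dd r + |Q|)$ are the cyclic shifts of $Q$ selected by $j \bmod |Q|$ and $r \bmod |Q|$. Because $Q$ is primitive, its $|Q|$ cyclic shifts are pairwise distinct, so $j \not\equiv r \pmod{|Q|}$ forces $Q_1 \neq Q_2$ and hence $\dH(Q_1, Q_2) \ge 1$. Aligning the two windows block by block, each of the $\lfloor |P|/|Q| \rfloor$ complete length-$|Q|$ blocks contributes at least one mismatch, so
\[
	\dH\bigl(Q^\infty[j \dd j + |P|),\, Q^\infty[r \dd r + |P|)\bigr) \ge \left\lfloor \frac{|P|}{|Q|} \right\rfloor \ge 100k,
\]
using $|P| \ge \ell$ together with $|Q| < \ell/100k$ (and that $100k$ is an integer). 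Since $100k > 9k$, this contradicts the upper bound, so every occurrence must satisfy $j \equiv r \pmod{|Q|}$, which is exactly the claim.

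I expect no serious obstacle here; the argument is short once the right two windows are compared. The only point to state carefully is the block-by-block counting: I would stress that primitivity of $Q$ (not merely an ``on average'' bound) is what guarantees $Q_1 \neq Q_2$ for \emph{every} pair of distinct residues, and that the floor $\lfloor |P|/|Q| \rfloor$ counts only complete periods, so the one-mismatch-per-block estimate is rigorous. Everything else, namely the triangle inequality and the restriction of a mismatch set to a subwindow, is routine.
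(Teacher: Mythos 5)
Your proof is correct and follows essentially the same route as the paper's: an upper bound of $9k$ on $\dH(Q^\infty[j\dd j+|P|), Q^\infty[r\dd r+|P|))$ via the triangle inequality through $P$ and $T$, contradicted by a lower bound of $\lfloor|P|/|Q|\rfloor$ when the offsets are incongruent modulo $|Q|$. The only cosmetic differences are that you spell out the block-by-block counting via distinctness of cyclic shifts of a primitive string (which the paper leaves implicit) and use the bound $|Q|<\ell/100k$ where the paper uses $|Q|\le|P|/128k$; both suffice.
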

\begin{proof}
	For any $x \in \Occ(P, T)$ by triangle inequality we have
	\[
		\dH(\QP, \QT[x \dd x + |P|)) \le
		\dH(\QP, P) + 
		\dH(P, T[x \dd x + |P|)) +
		\dH(T[x \dd x + |P|), \QT[x \dd x + |P|))
		\]
		and since
		\begin{itemize}
			\item $\dH(\QP, P) \le 2k$,
			\item $x \in \Occ(P, T) \Rightarrow \dH(P, T[x \dd x + |P|)) \le k$,
			\item $\dH(T[x \dd x + |P|), \QT[x \dd x + |P|)) \le \dH(T, \QT) \le 6k$
		\end{itemize}
		we get
		\[ \dH(\QP, \QT[x \dd x + |P|)) \le 9k. \]
		Recall that $\QP = Q^\infty[r \dd r + |P|)$ and $\QT[x \dd x + |P|) = Q^\infty[x \dd x + |P|)$ both have a primitive period $Q$ (with offsets $r$ and $x$, respectively).
		If their offsets are not congruent modulo $|Q|$, we can bound the number of mismatches by
		\[ \dH(\QP, \QT[x \dd x + |P|)) \ge \left \lfloor |P| / |Q| \right \rfloor > 9k,\]
		which yields a contradiction (the second inequality follows from $|Q| \le |P| / 128k$).
		Therefore $x \in \set{r + i|Q| \ : \ i \in \mathbb{Z}}.$
\end{proof}

Following \Cref{occ subseteq progression}, choose $r$ such that $\Occ(P, T) \subseteq C_r$. 
In order to characterize $\Occ(P, T)$, let us now analyze the values $\dH(P, T[x \dd x + |P|))$ for $x \in C_r$.
From triangle inequality we have
\begin{equation}
	\dH(P, T[x \dd x + |P|)) \le \dH(P, \QP) + \dH(\QP, \QT[x \dd x + |P|)) + \dH(\QT[x \dd x + |P|), T[x \dd x + |P|)). \label{basic triangle inequality} 
\end{equation}
Observe that 
\begin{equation}
	\QP = \QT[x \dd x + |P|), \label{qp=qt}
\end{equation}
since both strings have a period $Q$ with offsets congruent modulo $|Q|$, which gives us
\begin{equation}
	\dH(P, T[x \dd x + |P|)) \le \dH(P, \QP) + \dH(\QT[x \dd x + |P|), T[x \dd x + |P|)). \label{triangle inequality}
\end{equation}
We will later show that the above inequality is in fact an equality for all $x \in C_r$ except for at most $\Oh(k^2)$ exceptions $E$.
Specifically, define
	\[ E := \set{\tau - \pi \ : \ \pi \in \Mis(P, \QP), \ \tau \in \Mis(T, \QT)}, \]
which intuitively represents the set of all starting positions $x$ in $T$ such that when comparing $P$ and $T[x \dd x + |P|)$, at least one pair of mismatches aligns with each other.
Note that $|E| \le \dH(P, \QP) \cdot \dH(T, \QT) \le 2k \cdot 6k = 12k^2$.
Finally, we define $R_{0}, R_{1},\ldots, R_b$. Let:
\begin{itemize}
	\item $\tau_0 < \tau_1 < \dots < \tau_{b - 1}$ denote the sorted elements of $\Mis(T, \QT)$,
	\item $R_i := (\tau_{i - 1} \dd \tau_i]$ for all $0 \le i \le d$, where we set $\tau_{-1} := -1$ and $\tau_b := |T|$.
\end{itemize}
This is illustrated in Figure~\ref{fig:regions}.

The elements of $\Mis(T, \QT)$ can be computed in $\Oh(k)$ time by extracting
the mismatches with longest common prefix queries, which allows us to
find the regions in $\Oh(\poly(k))$ time.
Similarly, we can compute the elements of $\Mis(P, \QP)$ in $\Oh(k)$ time, so
we can also compute the set $E$ in $\Oh(\poly(k))$ time.
We stress that the regions are the same for every considered pattern $P$
(but the set $E$ does depend on the pattern $P$).

\begin{figure}[h]
\begin{center}
\includegraphics[width=\textwidth]{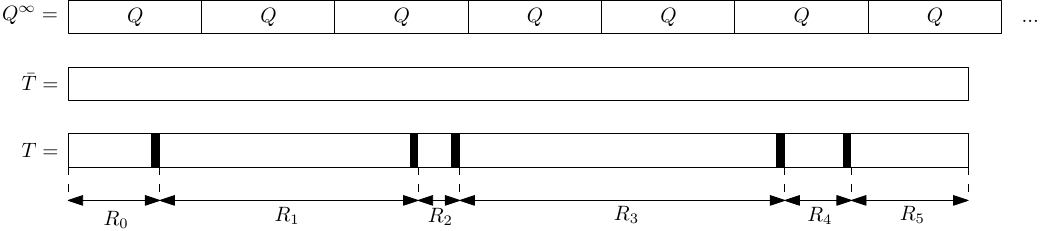}
\caption{Regions $R_{0},R_{1},\dots,R_{d}$. Black rectangles denote the elements of $\Mis(T, \QT)$.}
\label{fig:regions}
\end{center}
\end{figure}

Now we can state our extension of the structural characterization of~\cite{Charalampopoulos2020}.

\begin{restatable}{lemma}{restateKeyTheorem}
	\label{key theorem}
	There exist an integer $r$ and a set $E$, with $|E|=\Oh(k^{2})$, such that, for each pair $s,t$, with $0 \le s \le t \le b$:
	\[ 
		\Ext(P, T, A \cap R_s) \cap R_t = 
		\begin{cases}
			\left(\left(C_r \cap A \cap R_s\right) + |P|\right) \cap R_t & \text{ if } \dH(P, \QP) + t - s \le k,\\
			\text{some subset of $E + |P|$} & \text{ otherwise.}
		\end{cases}
	\]
\end{restatable}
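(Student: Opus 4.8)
The plan is to prove the characterization for the particular $r$ supplied by \Cref{occ subseteq progression} (so that $\Occ(P,T) \subseteq C_r$) and the set $E$ defined above; the whole argument reduces to two exact statements about an offset $x \in C_r$, after which the case split is immediate. First I would identify the text-mismatch term on the right of \eqref{triangle inequality}: the quantity $\dH(\QT[x \dd x+|P|), T[x \dd x+|P|))$ is exactly the number of elements of $\Mis(T,\QT)$ lying in $[x, x+|P|)$. If $x \in R_s$ and $x+|P| \in R_t$, then by the definition of the regions $\tau_{s-1} < x \le \tau_s$ and $\tau_{t-1} < x+|P| \le \tau_t$; since the $\tau_i$ are sorted, the mismatches inside $[x, x+|P|)$ are precisely $\tau_s, \tau_{s+1}, \dots, \tau_{t-1}$, so there are exactly $t-s$ of them (this also covers $s=t$, giving $0$). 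Hence $\dH(\QT[x \dd x+|P|), T[x \dd x+|P|)) = t-s$ whenever $x \in R_s$ and $x+|P| \in R_t$.

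Second, and this is the heart of the lemma, I would show that \eqref{triangle inequality} is an \emph{equality} for every $x \in C_r \setminus E$. Write $W := \QP = \QT[x \dd x+|P|)$, where the string identity is \eqref{qp=qt} and holds because $x \in C_r$. Let $M_P := \Mis(P,\QP)$ and $M_T := (\Mis(T,\QT) - x) \cap [0,|P|)$ record the positions where $P$ and where $T[x \dd x+|P|)$ differ from $W$. At a position outside both sets the characters $P[p]$ and $T[x+p]$ agree; at a position in exactly one of them they are forced to differ; only at a position in $M_P \cap M_T$ can two characters that both differ from $W[p]$ happen to coincide. Comparing the true distance $\dH(P, T[x \dd x+|P|))$ with $|M_P| + |M_T| = \dH(P,\QP) + \dH(\QT[x \dd x+|P|), T[x \dd x+|P|))$, the latter overcounts by a nonnegative amount that vanishes exactly when $M_P \cap M_T = \emptyset$. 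Finally, $M_P \cap M_T \ne \emptyset$ iff $x = \tau - \pi$ for some $\pi \in \Mis(P,\QP)$ and $\tau \in \Mis(T,\QT)$, i.e.\ iff $x \in E$; so the inequality is tight for every $x \in C_r \setminus E$.

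With these two facts the lemma follows. Recall $\Ext(P,T,A \cap R_s) \cap R_t = ((\Occ(P,T) \cap A \cap R_s) + |P|) \cap R_t$ and $\Occ(P,T) \subseteq C_r$. In the first case, $\dH(P,\QP) + t - s \le k$: for any $x \in C_r \cap A \cap R_s$ with $x+|P| \in R_t$, the region count and \eqref{triangle inequality} give $\dH(P, T[x \dd x+|P|)) \le \dH(P,\QP) + (t-s) \le k$, so $x \in \Occ(P,T)$; combined with $\Occ(P,T) \subseteq C_r$ this yields the claimed equality with $((C_r \cap A \cap R_s) + |P|) \cap R_t$. In the second case, $\dH(P,\QP) + t - s > k$: any $x$ contributing to the set lies in $\Occ(P,T) \cap R_s$ with $x+|P| \in R_t$; if $x \notin E$, the equality of the previous step together with the region count force $\dH(P, T[x \dd x+|P|)) = \dH(P,\QP) + (t-s) > k$, contradicting $x \in \Occ(P,T)$, so $x \in E$ and its extension $x+|P|$ lies in $E + |P|$. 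Thus the set is a subset of $E+|P|$, as required.

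The step I expect to demand the most care is the equality characterization of the second paragraph. Since the triangle inequality for Hamming distance is in general strict, one has to argue precisely that its only source of slack is an alignment of a pattern mismatch (against $\QP$) with a text mismatch (against $\QT$), and that the offsets $x$ producing such an alignment form exactly the set $E$. The region counting and the two-case bookkeeping are then routine.
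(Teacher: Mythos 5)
Your proposal is correct and follows essentially the same route as the paper: the same choice of $r$ and $E$, the same counting argument showing that the number of text--$\QT$ mismatches inside $[x, x+|P|)$ is exactly $t-s$ (the paper's \cref{Ist lemma}), the same observation that the triangle inequality \eqref{triangle inequality} is tight for all $x \in C_r \setminus E$ (the paper's \cref{triangle equality}, which it proves by a per-position triangle inequality rather than your direct mismatch-set counting, but the two arguments are interchangeable), and the same two-case bookkeeping. No gaps.
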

	
	\begin{proof}
		For any $s > t$ the resulting set is trivially empty. Now let us fix any $s \le t$ and define
		\[ I_{st}^P := C_r \cap A \cap R_s \cap (R_t - |P|).\]
		We will show that either $I_{st}^P \subseteq \Occ(P, T)$ or $I_{st}^P \cap \Occ(P, T) \subseteq E$, depending on whether $\dH(P, \QP) + t - s \le k$ by using the following property:
		\begin{restatable}[]{proposition}{restateIstLemma}
		\label{Ist lemma}
			For all $x \in I_{st}^P$ we have
			$ \dH(T[x \dd x + |P|), \QT[x \dd x + |P|)) = t - s $.
		\end{restatable}
		\begin{proof}
	We know that
	\begin{itemize}
		\item $x \in R_s$, so $\tau_{s - 1} < x \le \tau_s$,
		\item $x + |P| \in R_t$, so $\tau_{t - 1} < x + |P| \le \tau_t$,
	\end{itemize}
	therefore
	\[ [x, x + |P|) \cap \set{\tau_0, \tau_1, \dots, \tau_{b - 1}} = \set{\tau_s, \tau_{s + 1}, \dots, \tau_{t - 1}},\]
	and finally
	\begin{align*}
		\dH(T[x \dd x + |P|), \QT[x \dd x + |P|)) &= 
		\big|[x, x + |P|) \cap \Mis(T, \QT)\big| \\
		&= \big|[x, x + |P|) \cap \set{\tau_0, \tau_1, \dots, \tau_{b - 1}}\big|  \\
		& = \big|\set{\tau_s, \tau_{s + 1}, \dots, \tau_{t - 1}}\big| \\
		&= t - s. \qedhere
	 \end{align*}
\end{proof}
		
		Now assume that $\dH(P, \QP) + t - s \le k$. In that case, recall that by \eqref{triangle inequality} combined with the above lemma, for every $x \in I_{st}^P$ we have
		\eq{
			\dH(P, T[x \dd x + |P|)) &\le \dH(P, \QP) + \dH(\QT[x \dd x + |P|), T[x \dd x + |P|))
			= \dH(P, \QP) + t - s \le k.
		}
		Consequently $x \in \Occ(P, T)$, and therefore $I_{st}^P \subseteq \Occ(P, T)$.
		In that case observe that
		\eq{
			\Ext(P, T, A \cap R_s) \cap R_t
			&= \left( \left(\Occ \left(P, T\right) \cap A \cap R_s \right) + |P| \right) \cap R_t \\
			&= \left(\Occ \left(P, T\right) \cap A \cap R_s \cap \left(R_t - |P| \right) \right) + |P| \\
			&= \left(\Occ \left(P, T\right) \cap C_r \cap A \cap R_s \cap \left(R_t - |P| \right) \right) + |P| \\
			&= \left(\Occ \left(P, T\right) \cap I_{st}^P \right) + |P| \\
			&= I_{st}^P + |P|,
		}
		where the third equality follows from $\Occ(P, T) \subseteq C_r$ and the fifth from $I_{st}^P \subseteq \Occ(P, T)$.
		Since $I_{st}^P + |P| = \left(\left(C_r \cap A \cap R_s\right) + |P|\right) \cap R_t$ as desired, the proof for this case is complete. 
		
		For the second case, when $\dH(P, \QP) + t - s > k$, we need to make use of the following property:
		\begin{restatable}[]{proposition}{restateTriangleEquality}
			\label{triangle equality}
			For all $x \in C_r \setminus E$ we have
			\[
				\dH(P, T[x \dd x + |P|)) = \dH(P, \QP) + \dH(\QT[x \dd x + |P|), T[x \dd x + |P|)).
			\]
		\end{restatable}
		\begin{proof}
	Consider the triangle inequality \eqref{basic triangle inequality} stated explicitly for each position $i \in [0 \dd |T|)$:
	\[
		\dH(P[i], T[i + x]) \le \dH(P[i], \QP[i]) + \dH(\QP[i], \QT[i + x]) + \dH(\QT[i + x], T[i + x]).
	\]
	From \eqref{qp=qt} we already know that $\dH(\QP[i], \QT[i + x]) = 0$, thus
	\[
		\dH(P[i], T[i + x]) \le \dH(P[i], \QP[i]) + \dH(\QT[i + x], T[i + x]).
	\]
	We will now show that the above inequality holds with equality by considering two cases.
	The proof is completed by summing the equations for all $i \in [0 \dd |T|)$.
	\begin{enumerate}[$1^\circ$]
		\item $i + x \not \in \Mis(T, \QT)$. In that case observe that by triangle inequality
			\[ \dH(P[i], \QP[i]) \le \dH(P[i], T[i + x]) + \dH(T[i + x], \QT[i + x]) + \dH(\QT[i + x], \QP[i])\]
			and since $\dH(T[i + x], \QT[i + x]) = 0$ by assumption and $\dH(\QT[i + x], \QP[i]) = 0$ by \eqref{qp=qt} we get
			\[ \dH(P[i], \QP[i]) + \dH(\QT[i + x], T[i + x]) \le \dH(P[i], T[i + x]).\]
		\item $i \not \in \Mis(P, \QP)$. In that case observe that by triangle inequality
			\[ \dH(\QT[i + x], T[i + x]) \le \dH(\QT[i + x], \QP[i]) + \dH(\QP[i], P[i]) + \dH(P[i], T[i + x]) \]
			and similarly since $\dH(\QP[i], P[i]) = 0$ and $\dH(\QT[i + x], \QP[i]) = 0$ we again get
			\[ \dH(P[i], \QP[i]) + \dH(\QT[i + x], T[i + x]) \le \dH(P[i], T[i + x]).\]
	\end{enumerate}
	It remains to show that every $i \in [0 \dd |T|)$ falls into at least one of these two cases.
	Indeed, if for some $i$ we would have $i \in \Mis(P, \QP)$ and $i + x \in \Mis(T, \QT)$, then by the definition of $E$, we would get $x \in E$, which is a contradiction.
\end{proof}
		
		By the above, combined with \Cref{Ist lemma} for every $x \in I_{st}^P \setminus E$ we have
		\eq{
			\dH(P, T[x \dd x + |P|)) &= \dH(P, \QP) + \dH(\QT[x \dd x + |P|), T[x \dd x + |P|))
			= \dH(P, \QP) + t - s > k,
		}
		therefore $x \not \in \Occ(P, T)$, which implies $I_{st}^P \cap \Occ(P, T) \subseteq E$.
		Following the same reasoning as before, up to the fourth equality, we get
		\eq{
			\Ext(P, T, A \cap R_s) \cap R_t = \left(\Occ \left(P, T\right) \cap I_{st}^P \right) + |P| \subseteq E + |P|.
		}
		as required.
	\end{proof}

We apply Lemma~\ref{key theorem} on every pattern $P_{i}$. Whenever the second case applies, we process all occurrences
of $P_{i}$ naively. We observe that by definition we have
\[
	\Ext(P, T, A \cap R_s) \cap R_t = ((\Occ(P, T) \cap E \cap A \cap R_s) + |P|) \cap R_t.
\]
Since we already have access to the previously calculated representation of $\Occ(P, T)$,
we can simply check for each element in $E$ whether it is in $\Occ(P, T) \cap A \cap R_s$ in $\Oh(k^2)$ time, as the representation of $\Occ(P, T)$ is of size $\Oh(k^2)$, so $\Oh(|E| \cdot k^2) = \Oh(k^4)$ time in total.

For the remaining patterns that fall into the first case, we still use the naive approach if $|Q| > z$ for some threshold $z$ to be chosen later.
Since $|C_{r}| = \Oh(\ell/|Q|)$, this takes $\Oh(\ell/z)$ time per pattern. Otherwise, $|Q| \leq z$.
We partition the remaining patterns into $|Q|$ groups with the same $r$.
Formally, let $\mathcal{P}_r$ denote the set of patterns $P$ with a specific value of $r$:
\[
	\bigcup_i \Ext(P_i, T, A \cap R_s) \cap R_t = \bigcup_{r = 0}^{|Q| - 1} \bigcup_{P \in \mathcal{P}_r} \left(\left(C_j \cap A \cap R_s\right) + |P|\right) \cap R_t.
\]
We calculate the result for each $r$ separately by phrasing it as a sumset of some common set of positions with the set of pattern lengths,
where the result is then truncated to $R_t$:
\eq{
	\bigcup_{P \in \mathcal{P}_j}\left(\left(C_j \cap A \cap R_s\right) + |P|\right) \cap R_t
	= \left(\left(C_j \cap A \cap R_s\right) \oplus \set{|P| \ : \ P \in \mathcal{P}_j} \right) \cap R_t.
}
This takes $\Oh(z\ell\log \ell)$ total time by \Cref{lem:fft}. Overall, the time complexity is $\Oh(d\poly(k)+\ell+d\ell/z+z\ell\log\ell)$, which by choosing
$z=\sqrt{d/\log\ell}$ becomes $\Oh(d\poly(k)+\ell\sqrt{d\log \ell})$ as promised.

\subsection{Combining the Cases}
\label{sec:combine}

After designing an algorithm for every case, we show how to combine them to obtain the claimed bounds.

\mainone*

\begin{proof}
By Lemma~\ref{lem:reduction}, to prove the theorem it is enough to show how to solve \APEmis, where $k=1$, in
$\Oh(m^{1.5}\polylog m+N)$ time. We choose $B'=\log^{2}m$ and $B=\sqrt{m}$.
For patterns of length at most $B'$, we use Theorem~\ref{thm:veryshort}.
For patterns of length at least $B'$ but at most $B$, we use Theorem~\ref{thm:short}.
Finally, for patterns of length at least $B$, we use Theorem~\ref{thm:long}.
Summing up the time complexities, we obtain
\[\Oh(m\log^{4}m+N)+\Oh(m\log^{3}m\log\log m+N)+\Oh(m+N+m^{1.5}\log^{2}m) = \Oh(m^{1.5}\polylog m+N)
\]
as required.
\end{proof}

\maintwo*

\begin{proof}
By Lemma~\ref{lem:reduction}, to prove the theorem it is enough to show how to solve \APEmis for any constant $k\geq 1$ in
$\Oh((m^{1.5}+N)\polylog m)$ time. We choose $B=\sqrt{m}$.
For patterns of length at most $B$, we use Theorem~\ref{thm:short}.
Finally, for patterns of length at least $B$, we use Theorem~\ref{thm:long}.
Summing up the time complexities, we obtain
\[\Oh(m^{1.5}\log^{k}m\log\log m+N\log^{k+1}m)+\Oh(m+N+m^{1.5}\log^{2}m) = \Oh((m^{1.5}+N)\polylog m)
\]
as required.
\end{proof}

\bibliographystyle{abbrv}
\bibliography{biblio}

\end{document}